\let\subparagraph\paragraph
\newcommand{\secref}[1]{Sec.~\ref{#1}}
\newcommand{\figref}[1]{Fig.~\ref{#1}}
\newcommand{\true}{\mathit{true}}
\newcommand{\false}{\mathit{false}}
\newcommand{\defas}{\ensuremath{\stackrel{\rm\scriptscriptstyle def}{=}}}
\newenvironment{algo}[2]%
{\par\vspace{1em}\noindent\textbf{Algorithm~#1}~(\emph{#2}).~~}
{}
\newcommand{\toolnameshort}{\textsc{DecentMon2}}
\def\N{\mathbb{N}}
\def\B{\mathbb{B}}
\newcommand{\cQ}{{\cal Q}}
\DeclareMathOperator{\dom}{dom}
\newcommand{\ltlG}{\mathbf{G}}
\newcommand{\ltlF}{\mathbf{F}}
\newcommand{\ltlU}{\mathbf{U}}
\newcommand{\ltlX}{\mathbf{X}}
\newcommand{\ltlW}{\mathbf{W}}
\newcommand{\ltlR}{\mathbf{R}}
\newcommand{\ltlP}{\overline{\mathbf{X}}}
\newcommand{\qinit}{q_{\tiny\rm init}}
\newcommand{\AP}{\ensuremath{\mathit{AP}}}
\DeclareMathOperator{\good}{good}
\DeclareMathOperator{\bad}{bad}
\DeclareMathOperator{\verdict}{verdict}
\DeclareMathOperator{\mem}{mem}
\DeclareMathOperator{\leadermon}{leader\_mon}
\DeclareMathOperator{\choosemon}{choose\_mon}
\begin{document}
\title{Efficient and Generalized Decentralized Monitoring\\ of Regular Languages
}
\author{
Tom Cornebize \and Yli\`es Falcone\\
\email{Tom.Cornebize@gmail.com, Ylies.Falcone@ujf-grenoble.fr}}
\institute{Universit\'e Joseph Fourier Grenoble I, Laboratoire d'Informatique de Grenoble, Grenoble, France}
\maketitle
\begin{abstract}
The main contribution of this paper is an efficient and generalized decentralized monitoring algorithm allowing to detect satisfaction or violation of any regular specification by local monitors alone in a system without central observation point.
Our algorithm does not assume any form of synchronization between system events and communication of monitors, uses state machines as underlying mechanism for efficiency, and tries to keep the number and size of messages exchanged between monitors to a minimum.
We provide a full implementation of the algorithm with an open-source benchmark to evaluate its efficiency in terms of number, size of exchanged messages, and delay induced by communication between monitors.
Experimental results demonstrate the effectiveness of our algorithm which outperforms the previous most general one along several (new) monitoring metrics.
\end{abstract}
\section{Introduction}
\label{sec:intro}
%%%%%%%%%%%%%%%%%%%%%%%%%%%%%%%%%%%%%%%%%%%%%%%%%%%%%%
%%%%%%%%%%%%%%%%%%%%%%%%%%%%%%%%%%%%%%%%%%%%%%%%%%%%%%
%
Monitoring is a verification technique based on runtime information.
From a practical perspective, a decision procedure, the so-called \emph{monitor}, analyzes a sequence of events (or a trace) from the system under scrutiny, and emits verdicts w.r.t. satisfaction or violation of a specification formalized by a property.
Being lightweight is an important feature of monitoring frameworks because the performance of the system should be disturbed in a minimal way.
%Most of the proposed monitoring frameworks are dedicated to centralised systems, i.e., systems where the monitor is able to observe all events of interests from the systems.
When the monitor collects events from a monolithic system, we refer to this as \emph{centralized monitoring}.

Modern systems are in essence distributed: they consist of several computation units (referred to as components in the sequel), possibly interacting together, and evolving independently.
Monitoring distributed systems is a long-standing problem.
The main challenge is to design algorithms that allow to i) efficiently monitor computation units of a system, ii) let local monitors recompute a global state of the system with minimal communication, and iii) monitor against rich specifications.
Existing monitoring frameworks usually assume the existence of a central observation point in the system to which components have to send events to determine verdicts; as seen for instance in~\cite{FalconeJNBB11,ZhouSLL09}.
In that case, from a theoretical perspective, monitoring reduces to the centralized case.
A more challenging situation occurs when such central observation point cannot be introduced in the system (because of architectural reasons or because communication should be minimized).
We refer to this as \emph{decentralized monitoring}.
In the decentralized setting, monitors emit verdicts with incomplete information: local monitors read local traces, i.e., incomplete versions of the global trace, and have to communicate with each other to build up a global verdict.
\paragraph{Related Work.}
Several approaches exist for monitoring distributed systems.
A temporal logic, M\textsc{t}TL, for expressing properties of
asynchronous multi-threaded systems was presented in~\cite{SenVAR06a}.
Its monitoring procedure takes as input a
\emph{safety} formula %given in M\textsc{t}TL
and a partially-ordered execution of a parallel asynchronous system.
M\textsc{t}TL augments linear temporal logic (LTL)~\cite{TemporalLogicOfProgram} with modalities related to the distributed/multi-threaded nature of the system under scrutiny.
Several works like \cite{DBLP:conf/fm/GenonMM06} target physically
distributed systems and address the monitoring problem of partially-ordered traces, and
introduce abstractions to deal with the combinatorial explosion of
these traces.

Close to our work is an approach to monitoring violations of invariants in distributed systems using knowledge~\cite{GrafPQ11}.
Model-checking the system allows to pre-calculate the states where a violation can be reported by a process alone.
When communication (i.e., more knowledge) is needed between processes, synchronizations are added.
Both~\cite{GrafPQ11} and our approach try to minimize the communication induced by the distributed nature of the system.
The main differences between our approach and~\cite{GrafPQ11} are that~\cite{GrafPQ11} requires the property to be stable (and considers only invariants) and uses a Petri net model to compute synchronization points.
We do not assume any model of the system, i.e., we consider the system as a black box.

Decentralized monitoring is also somewhat related to diagnosis of discrete-event systems which has the objective of detecting the occurrence of a fault after a finite number of steps, see for instance~\cite{lafortuneetal:2004,Cassez10}.
There are two main differences between monitoring and diagnosis.
In diagnosis, a specification with normal and faulty behavior is an input to the problem, whereas we consider the monitored system as a black box.
Also, when considering observability of distributed systems, diagnosis assumes a central observation point which may not have full access to information.
On the contrary, decentralized monitoring does not assume a central observation point, but that local monitors have access to all local information.
Similarly, decentralized observation~\cite{tripakis:2005:cdc-ecc} uses a central observation point in a system that collects verdicts from local observers that have limited memory to store local traces.
Note also that, neither diagnosis nor observability considers the problem of minimizing the communication overhead.

In previous work~\cite{BauerF12}, we proposed a decentralized monitoring algorithm for (all) LTL formulas.
The main novelties were to i) avoid the need for a central observation point in the system and ii) try to reduce the communication induced by monitoring by minimizing the number of messages exchanged between monitors.
The approach in~\cite{BauerF12} uses LTL specifications ``off-the-shelf'' by allowing the user to abstract away from the system architecture and conceive the system as monolithic.
The algorithm relied on a decentralized version of \emph{progression}~\cite{Bacchus98}: at any time, each monitor carries a temporarily extended goal which represents the formula to be satisfied according to the monitor that carries it.
The monitor rewrites its goal according to local observations and goals received from other monitors.
According to the propositions referred in the obtained formula, it might communicate its local obligation to other monitors.
Our approach relied on the perfect synchrony hypothesis (i.e., neither computation nor communication takes time) where communication relied on a synchronous bus.
This hypothesis is reasonable for certain critical embedded systems e.g., in the automotive domain (cf.~\cite{BauerF12} for more arguments along this line).
Moreover, it has been recently shown that this approach does not only ``work on paper" but can be implemented when finding a suitable sampling time such that the perfect synchrony hypothesis holds~\cite{Bartocci13}.

Nevertheless, to facilitate the application of~\cite{BauerF12} in more real scenarios, several directions of improvement can be considered.
First, it is assumed in~\cite{BauerF12} i) that at each time instant, monitors receive an event from the system and can communicate with each others, and ii) that communication does not take time.
Second, the approach used LTL formulas to represent the local state of the monitor and progression (i.e., formula rewriting) each time a new event is received.
A downside of progression, is the continuous growth of the size of local obligations with the length of trace; thus imposing a heavy overhead after 100 events.
Finally, while~\cite{BauerF12} minimizes communication in terms of number of messages (i.e., obligations), it neglects their (continuously growing) size, with the risk of oversizing the communication device, in practice.
\paragraph{Originality.}
In this paper, we propose to overcome the aforementioned drawbacks of~\cite{BauerF12} and make important generalization steps for its applicability.
First, instead of considering input specifications as LTL formulas we consider (deterministic) finite-state automata and can thus handle all regular languages instead of only counter-free ones.
Thanks to the finite-word semantics of automata, we avoid the monitorability issues induced by the infinite-word semantics of LTL~\cite{BauerLS06,FalconeFM12}.
Interestingly, using an automata-based structure is more runtime efficient than rewriting.
Second, in practice, communication and reception of events might not occur at the same rate or the communication device might become unavailable during monitoring.
Our algorithm allows desynchronization between the reception of events from the system and communication between monitors but also arbitrarily long periods of absence of communication, provided that a global clock exists in the system.
Our algorithm is fully implemented in an open-source benchmark.
Our experimental results demonstrate that our algorithm i) leads to a more lightweight implementation, and ii) outperforms the one in~\cite{BauerF12} along several (new) monitoring metrics.
\paragraph{Overview of the decentralized monitoring algorithm.}
Let $\mathcal{C} = \{ C_1, \ldots, C_n \}$ be the set of system components.
Let $L$ be a regular language formalizing a requirement over the system global behavior, i.e., the global requirement does not take into account the system structure.
Let $\tau_i = \tau_i(0) \cdots \tau_i(t)$ be the local behavioral trace on component $C_i$ at time $t\in \mathbb{N}$.
Further, let $\tau = \tau_1(0) \cup \ldots \cup \tau_n(0) \cdot \tau_1(1) \cup \ldots \cup \tau_n(1) \cdots \tau_1(t) \cup \ldots
\cup \tau_n(t)$ be the global behavioral trace, at time $t \in \mathbb{N}$, obtained by merging local traces.
An hypothesis of our framework is thus the existence of a global clock in the system.
From $L$, one can construct a \emph{centralized monitor} for $L$, i.e., a decision procedure having access to the global trace $\tau$ and emitting verdict $\top$ (resp. $\bot$) whenever $\tau$ is a good (resp. bad) prefix for $L$, i.e., whenever $\tau\cdot\Sigma^*\subseteq L$ (resp. $\tau\cdot \Sigma^* \subseteq (\Sigma^*\setminus L)$.
Then, from a centralized monitor, we define its \emph{decentralized version}, i.e., a monitor keeping track of possible evaluations of a centralized monitor when dealing with partial information about the global trace.
A copy of the decentralized monitor is attached to each component.
The decentralized monitor $\mathit{DM}_i$ attached to component $C_i$ reads the local trace $\tau_i = \tau_i(0) \cdots \tau_i(t)$, in separation.
Our decentralized monitoring algorithm orchestrates communication between monitors and how they exchange information about their received events or their evaluation of the current global state.
Communication between monitors is performed through messages and is not synchronized with the production of events on the system.
In other words, when a monitor sends some message to another one, there is no special assumption about the arrival time.
Furthermore, we assume that no message is lost when monitors communicate with each other.

The decentralized monitoring algorithm evaluates the global trace $\tau$ by reading each local trace $\tau_i$ of $C_i$, in separation.
In particular, it exhibits the following properties.
\begin{itemize}
\item
If a local monitor yields the verdict $\bot$ (resp. $\top$) on some component $C_i$ by observing $\tau_i$, it implies that $\tau\cdot \Sigma^* \subseteq \Sigma^*\setminus L$ (resp. $\tau\cdot \Sigma^* \subseteq L$) holds.
That is, a locally observed violation (resp. satisfaction) is, in fact, a global violation (resp. satisfaction).
\item
If the monitored global trace $\tau$ is such that $\tau\cdot\Sigma^* \subseteq \Sigma^* \setminus L$ (resp. $\tau\cdot \Sigma^* \subseteq L$), at some time $t$, one of the local monitors on some component $C_i$ yields $\bot$ (resp. $\top$), at some time $t' \geq t$ because
of some latency induced by decentralized monitoring, whatever is the global trace between $t$ and $t'$.
\end{itemize}
\paragraph{Paper Organization.}
The rest of this paper is organized as follows.
Section~\ref{sec:prelim} introduces some preliminaries and notations.
Section~\ref{sec:cm} proposes a generic (centralized) monitoring framework, compatible with frameworks that synthesize monitors in the form of finite-state machines.
Section~\ref{sec:dm} shows how to decentralize a monitor.
In \secref{sec:com}, we present how decentralized monitors communicate with each other to obtain a verdict in a decentralized manner.
Section~\ref{sec:sem} describes the relation between centralized and decentralized monitoring.
Section~\ref{sec:expe} presents our benchmark, {\toolnameshort}, used to evaluate an implementation of our monitoring algorithm.
Section~\ref{sec:fw} presents some perspectives.
\section{Preliminaries and Notations}
\label{sec:prelim}
%%%%%%%%%%%%%%%%%%%%%%%%%%%%%%%%%%%%%%%%%%%%%%%%%%%%%%
%%%%%%%%%%%%%%%%%%%%%%%%%%%%%%%%%%%%%%%%%%%%%%%%%%%%%%
%
$\N$ is the set of natural numbers.
For $i,j \in \N$, the (underlying set associated to the) interval of integers from $i$ to $j$ is denoted by $[i;j]$.
Given a finite set $E$, the set of finite sequences over $E$ is noted $E^*$.

We consider that the global system consists of a set of components $\{C_1,\ldots,C_n\}$, with $n \in \N\setminus\{0\}$.
Each component emits events synchronously and has a local monitor attached to it.
An event local to component $C_i$ is built over a set of atomic propositions $\AP_i$, $i\in[1;n]$, i.e., the local set of events is $\Sigma_i = 2^{\AP_i}$.
The set of all atomic propositions is $\AP = \cup_{i\in [1;n]} \AP_i$.
Atomic propositions are local to components by requiring that $\{AP_i \mid i \in [1;n]\}$ is a partition of $\AP$.
(Note, this hypothesis simplifies the presentation of the results in the paper but is not an actual limitation of our framework.)
The set of all local events in the system is $\cup_{i\in [1;n]} \Sigma_i$, where $\Sigma_i$ is visible to the monitor at component $C_i$, $i\in[1;n]$.
The global specification refers to events in $\Sigma = 2^{\AP}$ and is given by a regular language $L\subseteq \Sigma^*$.
Note that the specification does not take into account the architecture of the system and may refer to events involving atomic propositions from several components (i.e., $\Sigma \neq \cup_{i\in [1;n]} \Sigma_i$ in the decentralized case whereas $\Sigma = \cup_{i\in [1;n]} \Sigma_i$ in the centralized one or when there is only one component).
We assume that the (regular) language to be monitored is recognized by a deterministic finite-state automaton $(Q, \qinit, \Sigma, \delta, F)$ where $Q$ is the set of states, $\qinit\in Q$ the initial state, $\delta$ the transition function, and $F\subseteq Q$ the set of accepting states.

Over time, for $i \in [1;n]$, the monitor attached to component $C_i$ receives a trace $\tau_i\in (2^{\AP_i})^\ast$, a finite sequence of local events, representing the behavior of $C_i$.
The global behavior of the system is given by a global trace $\tau = (\tau_1 , \tau_2 , \ldots , \tau_n )$.
The global trace is a sequence of pair-wise union of the local events in component’s traces, each of which at time $t$ is of length $t + 1$ i.e., $\tau = \tau(0) \cdots \tau(t)$, where for $i\ < t$, $\tau(i)$ is the (i+1)-th element of $\tau$.
The sub-sequence $\tau[i; j]$ is the sequence containing the (i+1)-th to the (j+1)-th elements.
The substitution of the element at index $t$ in a sequence $\tau$ by the element $e$ is noted $\tau[t | e]$.
\section{Centralized Monitoring of (Propositional) Regular Languages}
\label{sec:cm}
%%%%%%%%%%%%%%%%%%%%%%%%%%%%%%%%%%%%%%%%%%%%%%%%%%%%%%%
%%%%%%%%%%%%%%%%%%%%%%%%%%%%%%%%%%%%%%%%%%%%%%%%%%%%%%%
%
In this section we propose a general framework for centralized monitoring of regular languages.
This framework is general enough to be compatible with most of the existing monitoring frameworks dedicated to propositional regular languages.

In the centralized case, the monitor is a central observation point.
Generally speaking, the purpose of the monitor is to determine whether the observed sequence forms a good or a bad prefix of the language being monitored.
For this purpose, the monitor emits verdicts in some truth-domain $\B$ s.t. $\{\bot,\top\} \subset \B$ where $\top$ and $\bot$ are two ``definitive values" used respectively when a validation (good prefix) and violation (bad prefix) of the language has been found, respectively.
\begin{definition}[Good and bad prefixes~\cite{DBLP:journals/tosem/BauerLS11}]
The sets of good and bad prefixes of a language $L\subseteq \Sigma^\ast$ are defined as:
\[
\good(L) = \{\tau\in\Sigma^\ast\mid \tau\cdot\Sigma^\ast \subseteq L\}, \qquad \bad(L) = \{\tau\in\Sigma^\ast\mid \tau\cdot\Sigma^\ast \subseteq (\Sigma^\ast\setminus L)\}.
\vspace{-0.5em}
\]
\end{definition}
Using good and bad prefixes, we can define the centralized semantic relation $\models_C$ for traces, using, for instance, the truth-domain $\B \defas\{\bot,?,\top\}$, where the truth-value $?$ indicates that no verdict has been found yet.
Given $\tau\in\Sigma^*$, we say that $\tau \models_C L = \top$ (resp. $\bot$) whenever $\tau\in\good(L)$ (resp. $\bad(L)$) and $\tau \models_C L = ?$ otherwise.
\begin{definition}[Centralized Monitor]
\label{def:centralized_mon}
A centralized monitor is a tuple $(Q,\Sigma,q_0,\delta,$ $\verdict)$ where $Q$ is the set of states, $\Sigma = 2^{\AP}$ the alphabet of events, $q_0
$ the initial state, $\delta : Q \times \Sigma \rightarrow Q$ the complete transition function, and $\verdict: Q \rightarrow \B$ is a function that associates a truth-value to each state.
\end{definition}
\begin{wrapfigure}{r}{0.29\textwidth}
\vspace{-10pt}
\centering
\includegraphics[scale=1]{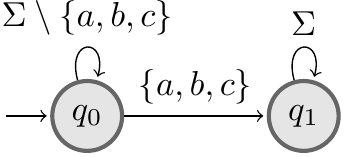}
\caption{{Transitions of $\mathit{CM}_1$}}
\label{fig:cm1}
\end{wrapfigure}

\noindent
A monitor is a Moore automaton, processing events from its alphabet, and emitting a verdict upon receiving each event.
Monitor-synthesis algorithms ensure that i) for any $\tau\in\Sigma^\ast$, $\verdict(\delta(q_0, \tau))=\top/\bot$ iff $\tau\in\good/\bad(L)$, where $\delta$ is extended to sequences in the natural way; ii) for any  $q \in Q$, if $\verdict(q) \in \{\top,\bot\}$ then $\forall \sigma \in \Sigma : \delta(q,\sigma) = q$.
A centralized monitor is thus a decision procedure w.r.t. the centralized semantics relation $\models_C$.

\begin{remark}[Truth-domains]
More involved truth-domains with refined truth-values (e.g., the ones used in~\cite{DBLP:journals/tosem/BauerLS11,FalconeFM12}) can be used in our framework without any particular difficulty.
\end{remark}
\begin{example}[Centralized Monitor]
\label{ex:cm1}
Consider $\AP^1 = \{a,b,c\}$ and $L_1$ the language of words over $2^{\AP^1}$ that contain at least one occurrence of the event $\{a,b,c\}$.
The monitor $\mathit{CM}_1$ of this language has its transition function $\delta_1$ depicted in~\figref{fig:cm1}.
Moreover, $\verdict(q_0) = ?$ and $\verdict(q_1) = \top$.
Consider $\tau_1 = \emptyset\cdot\{a,b\}\cdot\{a,b,c\}\cdot\{a\}$, we have $\emptyset\cdot\{a,b\}\cdot\{a,b,c\}\in\good(L_1)$ and $\tau_1\in\good(L_1)$.
\end{example}
\section{Decentralizing a Monitor}
\label{sec:dm}
%%%%%%%%%%%%%%%%%%%%%%%%%%%%%%%%%%%%%%%%%%%%%
%%%%%%%%%%%%%%%%%%%%%%%%%%%%%%%%%%%%%%%%%%%%%
%
Let us now use the previous example to see what would happen when using a centralized monitor on a local component where only a subset of $\AP$ can be observed.
Let us consider a simple architecture with three components $C_A, C_B, C_C$ respectively with sets of atomic propositions $\AP^1_A=\{a\}, \AP^1_B=\{b\}, \AP^1_C=\{c\}$.
If we use a central monitor on, say $C_A$, no event (in $2^{\AP^1_A}$) could allow the monitor to reach $q_1$.
Monitors should thus take into account what could \emph{possibly} happen on other components.
Given an observation on a local component, a decentralized monitor computes the set of states that are \emph{possible} with this observation, and refines (i.e., eliminate possible states) when communicating with other monitors (as we shall see in \secref{sec:com}).

Given a centralized monitor, we define its decentralized version as follows.
\begin{definition}[Decentralized Monitor]
Given a centralized monitor $(Q,\Sigma,q_0,\delta,$ $\verdict)$, the associated decentralized monitor is a 5-tuple $(2^Q\setminus\{\emptyset\},(2^{[1;n]} \setminus \{\emptyset\}) \times \Sigma,\{q_0\},\Delta_\delta,$ $\verdict_D)$ where:
\begin{itemize}
\item
$(2^{[1;n]} \setminus \{\emptyset\}) \times \Sigma$ is the alphabet,
\item
$\Delta_\delta: (2^Q\setminus \{\emptyset\}) \times (2^{[1;n]} \setminus \{\emptyset\}) \times \Sigma \rightarrow (2^Q\setminus \{\emptyset\})$ is the decentralized transition function defined as:\\
$
\Delta_\delta(\cQ, s, \sigma) = \{q'\in Q \mid \exists \sigma' \in \Sigma, \exists q \in \cQ: \sigma = \sigma' \cap \bigcup_{j\in s} \AP_j \wedge q' = \delta(q,\sigma')\},
$
\item
$\verdict_D: (2^Q\setminus \{\emptyset\}) \rightarrow \B$ is the decentralized verdict function, s.t.:
\[
\verdict_D (\cQ) =
\left\{
\begin{array}{ll}
b & \text{if } \exists b \in \B: \{\verdict(q) \mid q \in \cQ\} = \{b\},\\
? & \text{otherwise},
\end{array}
\right.
\]
for any $\cQ\in 2^Q\setminus \{\emptyset\}$.
\end{itemize}
\label{def:decentralized_mon}
\end{definition}
Intuitively, a decentralized monitor ``estimates" the global state that would be obtained by a centralized monitor observing the events produced on all components.
The estimation of the global state is modeled by a set of possible states (of the centralized monitor) given the (local) information received so far.
When a decentralized monitor receives an event $(s,\sigma)$, it is informed that the union of the atomic propositions that occurred on the components indexed in the set $s$ is $\sigma$.
The transition function is s.t. if the estimated global state is $\cQ\in 2^Q\setminus \{\emptyset\}$ and it receives $(s,\sigma)$ as event, then the estimated global state changes to $\Delta_\delta(\cQ, s, \sigma)$ which contains all states s.t. one can find a transition in $\delta$ from a state in $\cQ$ labeled with a global event $\sigma'$ compatible with $\sigma$.
In other words, if the actual global state belongs to $\cQ$, and the union of events that happen on components indexed in $s$ is $\sigma$, then the actual global state belongs to $\Delta_\delta(\cQ, s, \sigma)$ which is the set of states that can be reached from a state in $\cQ$ with all possible global events (obtained by any observation that could happen on components indexed in $[1;n]\setminus s$).
Regarding verdicts, a decentralized monitor emits the same verdict as a centralized one when the current state contains states of the centralized monitor that evaluate on the same verdict.
\begin{figure}[t]
\centering
\includegraphics[scale=1]{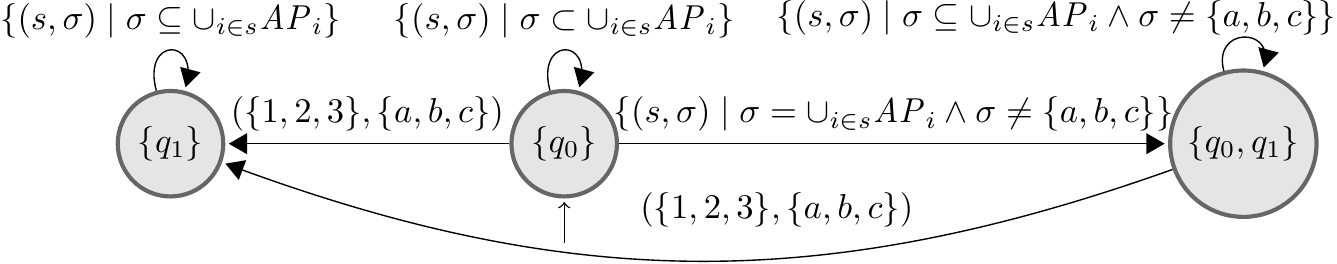}
\caption{Transitions of $\mathit{DM_1}$}
\label{fig:dm1}
\end{figure}
\begin{remark}[Verdict function]
The proposed verdict function allows a decentralized monitor to emit a verdict even if it does not ``know" the global state.
Moreover, alternative verdict functions (that would, for instance, return a set of verdicts from the centralized monitor) are possible.
\end{remark}
\begin{example}[Decentralized Monitor]
Let us consider again the architecture and language $L_1$ of Example~\ref{ex:cm1}.
Consider what happens initially on any of the components executing $\mathit{DM_1}$, the decentralized version of $\mathit{CM_1}$, see \figref{fig:dm1}.
Initially, the estimated global state is $\{q_0\}$.
Suppose the monitor is informed that $\{a\}$ occurred on component $C_A$ (of index 1), then it will change its estimated global state to $\Delta_{\delta_1}(\{q_0\}, \{1\}, \{a\}) = \{q_0, q_1\}$.
Intuitively, this transition can be understood as follows.
Knowing that $\{a\}$ occurred on $C_A$, the other possible global events are $\{a,b\}$ and $\{a,b,c\}$, as the monitor does not have information on what happened on $C_B$ and $C_C$.
In $\mathit{CM_1}$, from state $q_0$ and these events, states $q_0$ and $q_1$ can be reached.
Note, the only way to reach $\{q_1\}$ in $\mathit{DM_1}$, i.e., to know that the global state is $q_1$ (and is unique), $\mathit{DM_1}$ has to know that the union of events that occurred on components indexed in $\{1,2,3\}$ is $\{a,b,c\}$.
\end{example}
As illustrated by the previous example, a decentralized monitor does not depend on the component on which it executes.
Its transitions can occur on any component, as it receives an event together with the identifier of components on which such an event occurred.
However, a decentralized monitor is not purposed to be used alone but shall communicate with other decentralized monitors.
\section{Communication and Decision Making}
\label{sec:com}
%%%%%%%%%%%%%%%%%%%%%%%%%%%%%%%%%%%%%%%%%%%%%
%%%%%%%%%%%%%%%%%%%%%%%%%%%%%%%%%%%%%%%%%%%%%
\newcommand{\fctreceive}{
\begin{minipage}{0.54\linewidth}
\begin{algorithm}[H]
%\tcc{Function receive}
$(\mathit{rcv\_mem},\mathit{rcv\_state}) \longleftarrow (\mathit{false},\mathit{false})$ \;
	\When{an event $\sigma \in \Sigma_i$ is received from component}{
		$t \longleftarrow t + 1$ \;
		$\mem \longleftarrow \mem \sqcup\, [t \mapsto (\sigma,\{i\})]$\;
	}
	\When{a state $q' \in Q$ is received with time $t_{\rm new}$}{
		\If{$t_{\rm new} > t_{\rm last}$}{
			$(q,t_{\rm last}) \longleftarrow (q',t_{\rm new})$ \;
			$\mathit{rcv\_state} \longleftarrow \mathit{true}$ \;		
		}
	}
	\When{a partial memory $m \in \N \rightarrow$ $ \Sigma \times (2^{[1;n]}\setminus\{\emptyset\})$ is received}{
		$\mem \longleftarrow \mem \sqcup\, m$ \;
		$\mathit{rcv\_mem} \longleftarrow \mathit{true}$ \;		
	}
\caption{function $\mathsf{receive}$}
\label{algo:receive}
\end{algorithm}
\end{minipage}
}
\newcommand{\fctupdstate}{
\begin{minipage}[t]{0.45\linewidth}
\begin{algorithm}[H]
	$\cQ \longleftarrow \{q\}$ \;
	$\mathit{upd\_state}\longleftarrow \mathit{false}$ \;
%	$q\_send \longleftarrow nil$ \;
%	$new\_time \longleftarrow t_{\rm last}$ \;
	\For{$t'$ \textbf{from} $t_{\rm last}$ \textbf{to} $t$}{
		$(\sigma,s) \longleftarrow \mem(t')$ \;
		$\cQ \longleftarrow \Delta_\delta(\cQ,s,\sigma)$   \;
		\If{$\exists q' \in Q: \cQ = \{q'\}$}{
			$(q,t_{\rm last}) \longleftarrow (q', t' + 1)$\;
			$\mathit{upd\_state}\longleftarrow \mathit{true}$ \;
		}		
}
\caption{function $\mathsf{update\_state}$}
\label{algo:updstate}
\end{algorithm}
\end{minipage}
}
\newcommand{\thealgo}{
\begin{algorithm}[H]
$(t_{\rm last},t) \longleftarrow (0,-1) $ \;
$(q, \cQ) \longleftarrow (q_0, \{q_0\}) $ \;
$\mem \longleftarrow \{\,\}$ \;
\Repeat{the end of the trace and $t_{\rm last} > t$}{
	initialize \textit{message} \;
	$\mathsf{receive}()$ \;
	$\mathsf{update\_state}()$\;
	\If{$\mathit{upd\_state}\vee \mathit{rcv\_state}$}{
		\If{$\verdict(q) \in \{\top,\bot\}$}{
			\Return $\verdict(q)$\;
		}
		add $(q, t_{\rm last})$ to \textit{message} \;		
	}
	\If{$t_{\rm last} \leq t \wedge \left(\mathit{rcv\_mem}\vee \leadermon(i)\right)$}{
		add $\left(\mem\left(t_{\rm last},t\right), t_{\rm last}\right)$ to \textit{message} \;
	}
	\If{$\mathit{message}$ is not empty}{
		send $\mathit{message}$ to $M_{\choosemon(i)}$ \;
	}
}
\Return {$\verdict(q)$}
\caption{Decentralized monitoring algorithm executing on $C_i$ (main loop)}
\label{algo:dmon}
\end{algorithm}
}
Our aim is now to define how a collection of decentralized monitors, analyzing a given distributed trace, should communicate with each other to obtain a verdict in a decentralized manner.
The verdict indicates whether the trace, when interpreted as a global trace, is a good or a bad prefix of the language.
%
%%%%%%%%%%%%%%%%%%%%%%%%%%%%%%%%%%%%%%%%%%%%%%
\subsection{Preliminaries: Local Memory, Clocks, and Communication}
\label{sec:com:prelim}
%%%%%%%%%%%%%%%%%%%%%%%%%%%%%%%%%%%%%%%%%%%%%%
%
%
\paragraph{Monitor local memory.}
The local memory of a monitor is a partial function $\mem: \N \rightarrow \Sigma \times (2^{[1;n]}\setminus \{\emptyset\})$, purposed to record the ``local knowledge" w.r.t. (past instants of) the global (actual) trace produced by the system.
If $\mem(t) =  (\sigma_t,s_t)$, it means that the monitor knows that the set of all atomic propositions received by the components in $s_t$ is $\sigma_t$.
Moreover, if $\sigma \in \Sigma$ is the global event at time $t$ and $\mem(t) = (\sigma_t,s_t)$, then $\sigma \cap (\bigcup_{i\in s} \AP_i) = \sigma_t$.
In next section, we will see how after communicating, local monitors can discard elements from their memory.

As a local monitor memorizes the observed local events, it may inform other monitors of the content of its memory via messages.
When a monitor receives a memory chunk from another monitor, it merges it with its local memory.
For this purpose, for two memories $\mem$ and $\mem'$, we define the merged memory $\mem \sqcup \mem'$, as follows:
\[
(\mem \sqcup \mem') (t)
=
\left\{
\begin{array}{ll}
\mem(t) \cup \mem'(t) & \text{if } t\in\dom(\mem) \cap \dom(\mem'),\\
\mem'(t) & \text{if } t\in \overline{\dom(\mem)} \cap \dom(\mem'),\\
\mem(t) & \text{otherwise},
\end{array}
\right.
\]
where the union $(\sigma,s) \cup (\sigma',s')$ between two memory elements $(\sigma,s)$ and $(\sigma',s')$ is defined as $(\sigma \cup \sigma',s \cup s')$.
For instance, consider $\mem = \{ 0 \mapsto (\{b\},\{1,2\}),1\mapsto (\{a,b\},\{1,2\}), 2\mapsto (\emptyset,\{2\}) \}$ and $\mem' = \{ 1\mapsto (\{c\},\{3\}), 2\mapsto (\{c\},\{3\})\}$, we have $\mem \sqcup \mem' = \{ 0 \mapsto (\{b\},\{1,2\}), 1 \mapsto (\{a,b,c\},\{1,2,3\}), 2\mapsto (\{c\},\{2,3\}) \}$.
%
%
%%%%%%%%%%%%%%%%%%%%%%%%%%%%%%%%%%%%%%%%%%%
\paragraph{Monitor local clocks.}
Each local monitor carries two local (discrete) clocks $t$ and $t_{\rm last}$.
The purpose of $t$ is simply to store the time instant of the last received event from the local component.
The purpose of $t_{\rm last}$ is to store the time instant for which it knows the global state of the system.
Indeed, the decentralized monitoring algorithm presented in next section will ensure that, on each monitor $M_i$, for a global trace $\tau$:
\begin{itemize}
\item the last event $\sigma$ emitted by the local component was at time $t: \sigma = \tau(t)$.
\item the current state is the state corresponding to $t_{\rm last}$ : $q = \delta(q_0,\tau[0;t_{\rm last}-1])$; 
\end{itemize}
\paragraph{How monitors communicate.}
As mentioned before, local monitors are required to communicate with each other to share collected information (from their local observation or other monitors).
To ensure that communication between monitors aggregates correctly information over time, we suppose having two functions $\leadermon$ and $\choosemon$ that can be defined e.g., according to the architecture and possibly changing over time.

The function $\choosemon : [1;n] \rightarrow [1;n]$ indicates for each monitor, the monitor it should communicate with.
Local monitors are referred to by their indexes.
For information to aggregate correctly, we require $\choosemon$ to be bijective, and such that $\forall i \in [1;n], \forall k \in [1;n-1] : \choosemon^k(i) \neq i$ where $\choosemon^k(i) = \underbrace{\choosemon(\dots(\choosemon(i) )\dots)}_{k\;\text{times}}$.
One can consider for instance $\choosemon(i) = (i \mod n)+1$.
Note: these requirements are not limitations of our framework but rather guidelines for configuring the communication of our monitors where the architecture is such that a bidirectional direct communication exists between any two components.
The proposed algorithms can be easily adapted to any other architecture, provided that a bidirectional communication path exists between any two components (which otherwise would limit the interest of decentralized monitoring).

The function $\leadermon: [1;n] \rightarrow \{\true,\false\}$ indicates whether the monitor on the component of the given index is a leader.
When receiving new events from the system, only leader monitors can send their observation.
The number of leader monitors influences communication metrics of the monitoring algorithm (see~\secref{sec:expe}).
Using a function makes the algorithm generic and allows leader monitors to change over time.
%
%%%%%%%%%%%%%%%%%%%%%%%%%%%%%%%%%%%%%%%%%%%%%%
\subsection{Decentralized Monitoring Algorithm}
\label{sec:com:algo}
%%%%%%%%%%%%%%%%%%%%%%%%%%%%%%%%%%%%%%%%%%%%%%
%
Let us now present the main algorithm for decentralized monitoring.
The algorithm is executed independently on each component until there is no event to read and the local monitor has determined the global state, which is given by the condition $t_{\rm last} > t$ (the time instant corresponding to the last known global state is greater than the time instant of the last received event from the local component).

At an abstract level, the algorithm is an execution engine using a decentralized monitor as per Definition~\ref{def:decentralized_mon}.
It computes the locally estimated global state of the system by aggregating information from events read locally and partial traces received from other monitors.
It stores in $q$ the last known global state of the system at time $t_{\rm last}$, and in $t$ the time instant of the last event received from the system.
The main steps of the algorithm can be summarized as follows:
\begin{algo}{DM}{Decentralized Monitoring}
Let $L$ be the language to be monitored and $q_0$ the initial state of its associated centralized monitor.
Initialize variables $q$ to $q_0$, $t_{\rm last}$ to $0$, and $t$ to $-1$.
\begin{description}
\item
[DM1.]
[Wait] for something from the outside: either an event $\sigma$ from the system or a message from another monitor (a pair $(q',t_{\rm new}) \in Q \times \N$ or a partial memory $m$).
\item
[DM2.]
[Update]
If an event (resp. a trace) is received from a component (resp. another monitor), update memory and $t$.
If a state is received, update the last known global state.
\item
[DM3.]
[Compute new state]
Using the definition of the transition function of the decentralized monitor (Definition~\ref{def:decentralized_mon}) and the local memory between $t_{\rm last}$ and $t$, compute the set of possible states.
If the set of possible states is a singleton, $q$ and $t_{\rm last}$ are updated.
\item
[DM4.]
[Evaluate and return]
If a definitive verdict ($\top$ or $\bot$) is found, return it (and inform other monitors).
\item
[DM5.]
[Prepare communication]
Prepare a message to be sent.
If a state is received or a new state has been computed (i.e., if $q$ and $t_{\rm last}$ have been modified), append it to the message together with $t_{\rm last}$.
If there are events that occurred after the last found state ($t\geq t_{\rm last}$), append them to the message, provided that the monitor is a leader ($\leadermon(i)=\mathit{true}$) or these events come from another monitor.
\item
[DM6.]
[Communicate]
If there is a non-empty message to be sent, then send it to the associated monitor (as determined by function $\choosemon(i)$).
Go back to step DM1.
\end{description}
\end{algo}
\fctreceive
\fctupdstate
\thealgo
At a concrete level, the abstract algorithm is realized in Algorithms~\ref{algo:receive}, \ref{algo:updstate}, and~\ref{algo:dmon}.
These algorithms execute in the same memory space, and variables are global. 
The $\mathsf{receive}$ function (Algorithm~\ref{algo:receive}) realizes steps \textbf{D1} and \textbf{D2} where i) events and messages from other monitors are received, and, ii) the memory and current state are updated.
The $\mathsf{receive}$ function is called by the main loop (Algorithm~\ref{algo:dmon}) and blocks the execution until an input is received.
It can receive three possible inputs (and any combination of them): an event $\sigma$ from the component (then it updates $\mem$ and $t$), a state $q'$ from another monitor (then it updates $q$ and $t_{\rm last}$ if it does not have fresher information), a partial memory $m$ from another monitor (then it updates $\mem$), or both a state and a partial memory.
The function also keeps track of whether a state or a partial memory was received using two Booleans $\mathit{rcv\_state}$ and $\mathit{rcv\_mem}$.
The $\mathsf{update\_state}$ function (Algorithm~\ref{algo:updstate}) realizes step \textbf{D3} by implementing the transition function $\Delta_\delta$ of the decentralized monitor using at the same time the local memory $\mem$ for efficiency reasons.
Variable $q$ keeps track of the last know global state (at time $t_{\rm last}$.
Variable $\cQ$ is a temporary variable that keeps track of the set of possible states.
Variable $\mathit{upd\_state}$ is set to $\mathit{true}$ if the execution of $\mathsf{update\_state}$ function allows to update the last know global state.
The main loop (Algorithm~\ref{algo:dmon}) realizes steps \textbf{D4}, \textbf{D5}, and \textbf{D6} where the message is built.
Step \textbf{D4} is realized by lines 8 to 11, where, if a new global state is known (either computed with $\mathsf{update\_state}$ or received in a message), then it is checked if the associated verdict is definitive.
The new state together with $t_{\rm last}$ are added to the message.
Then, when there are some local events to be shared ($t_{\rm last} \leq t$), if the monitor received a partial memory or the monitor is a leader (line 12), the partial memory from $t_{\rm last}$ to $t$ (i.e., $\mem(t_{\rm last},t)$) and the value of $t_{\rm last}$ are added to the message (line 13).
Finally (lines 14-15), the (non-empty) message is sent to the monitor of index $\choosemon(i)$.
\begin{example}[Decentralized Monitoring]
\label{ex:dm}
Let us go back to the monitoring of the specification introduced in Example~\ref{ex:cm1} and see how this specification is monitored with Algorithms~\ref{algo:receive}, \ref{algo:updstate}, and~\ref{algo:dmon}.
Table~\ref{tab:ex1} shows how the situation evolves on all three monitors when monitoring the global trace $\emptyset\cdot\{a,b\}\cdot\{a,b,c\}\cdot\{a\}$.
As mentioned earlier, the sequence of states of the centralized monitor is $q_0\cdot q_0\cdot q_1\cdot q_1$, and the verdict associated to this trace is $\top$, obtained after the third event.
For this example, $\leadermon(i) = (i=1)$ and $\choosemon(i) = (i \mod n) + 1$.
For simplicity, in this example, communication between monitors and events from the system occur at the same rate.
Cells are colored in grey when a communication occurs between monitors or an event is read from a component.
On each monitor, between any two communications or event receptions, the local memory is represented on two lines: first the values of $t_{\rm last}$, $t$, and $q$ the last determined global state, and second the memory content.
\end{example}
\begin{table}[t]
\centering
\caption{Decentralized monitoring of $L_1$ on 3 components}
\includegraphics[scale=0.98]{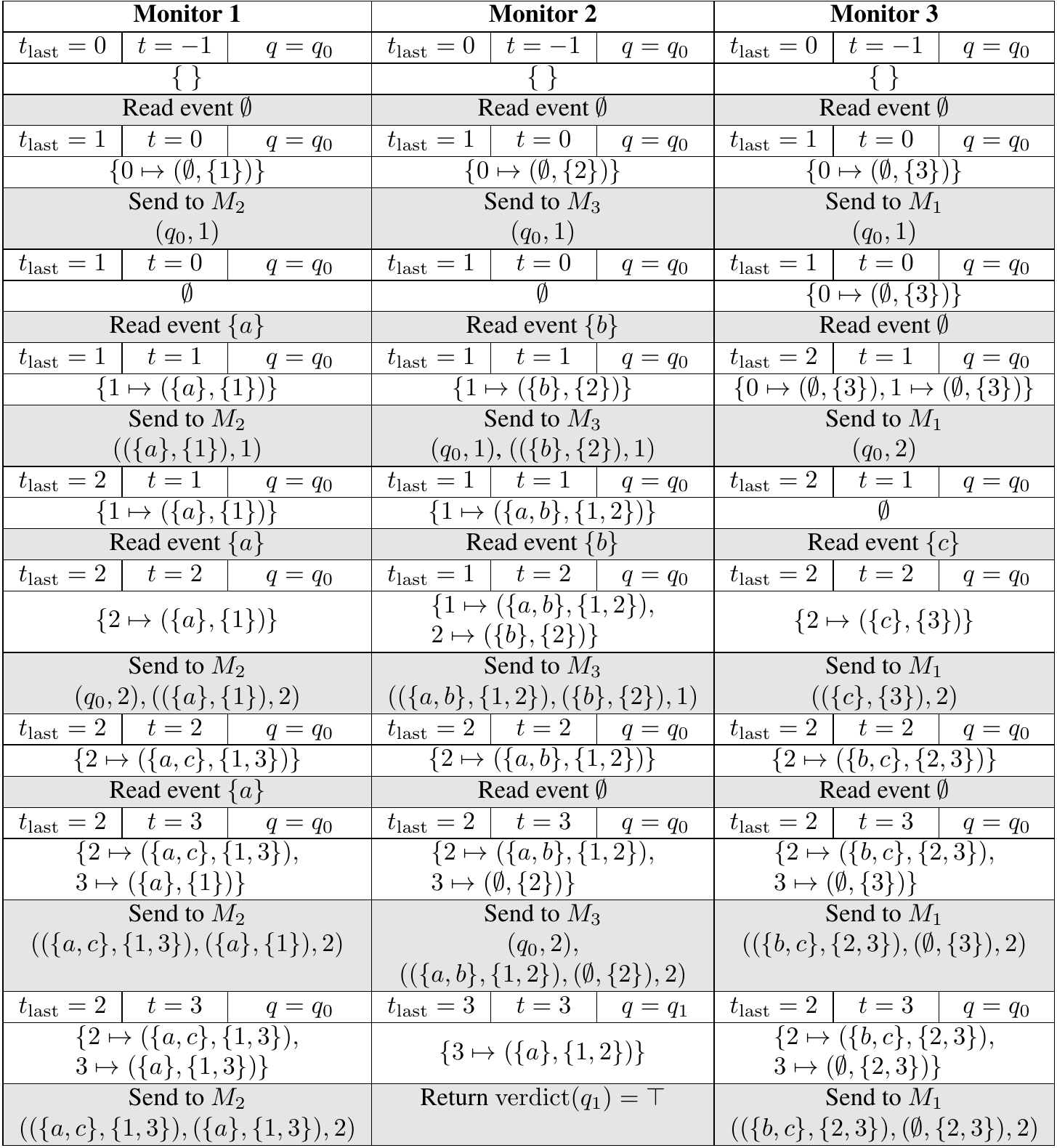}
\label{tab:ex1}
\end{table}
\begin{remark}[Domain of $\mem$]
At any moment, the only used elements of $\mem$ are those between $t_{\rm last}$ and $t$.
Thus, after each step of the algorithm, elements before $t_{\rm last}$ can be discarded.
Thus, $\dom(\mem) = [t_{\rm last};t]$ is of bounded size under certain conditions discussed in \secref{sec:sem}. 
% condition : les moniteurs peuvent communiquer entre eux (si on ne fait que recevoir des événements sans jamais communiquer, la séquence grandit sans jamais supprimer d'éléments)
\end{remark}
\begin{remark}[Optimizations]
Further optimizations can be taken into account in the algorithm.
For instance, using an history of sent messages, monitors can remove information from some messages addressed to another monitor, if they already sent this information in a previous message.
Further studies are needed to explore the trade-off between local memory consumption vs the size of exchanged messages in the system.
\end{remark}
\section{Semantics and Properties of Decentralized Monitoring}
\label{sec:sem}
%%%%%%%%%%%%%%%%%%%%%%%%%%%%%%%%%%%%%%%%%%%%%%%%%%%%%%
%%%%%%%%%%%%%%%%%%%%%%%%%%%%%%%%%%%%%%%%%%%%%%%%%%%%%%
%
In this section, we discuss further the semantics induced by the decentralized monitoring algorithm and its properties.
\begin{definition}[Semantics of Decentralized Monitoring]
Let $\mathcal{C} = \{ C_1, \ldots, C_n \}$ be the set of system
  components, $L\subseteq (2^\AP)^*$ be a regular language, and $\mathcal{M} = \{
  M_1, \ldots, M_n \}$ be the set of component monitors.
  Further, let $\tau = \tau_1(0) \cup 
\ldots \cup \tau_n(0) \cdot \tau_1(1) \cup 
\ldots \cup \tau_n(1) \cdots \tau_1(t) \cup \ldots
\cup \tau_n(t)$ be the global
  behavioral trace, at time $t \in \mathbb{N}$.
  If some component $C_i$, with $i \leq n$, $M_i$ has a local state $\cQ$ s.t. $\verdict_D(\cQ)  = \top$ (resp.\ $\bot$), then $\tau \models_D L = \top$
  (resp.\ $\bot$). Otherwise, %we have
  $\tau\models_D L = ?$.
\end{definition}
By $\models_D$ we denote the satisfaction relation on finite traces in
the decentralized setting to differentiate it from the centralized one.
Obviously, $\models_C$ and $\models_D$ both yield values from the same
truth-domain. However, the semantics are not equivalent, since the
current state of the decentralized monitor can contain several states of the centralized one, when a local component has not enough information to determine a verdict.
This feature was illustrated in Example~\ref{ex:dm} where at $t=2$, the global trace is $\emptyset\cdot \{a,b\} \cdot \{a,b,c\}$, which is a good prefix of the monitored language, only reported at $t=4$ by Monitor 2.

The precise relation between the centralized and decentralized semantics is given by the two following theorems.
\begin{theorem}[Soundness]
  Let $L\subseteq \Sigma^\ast$ and $\tau \in \Sigma^\ast$, then $\tau \models_D
  L = \top/\bot \Rightarrow \tau \models_C L = \top/\bot$, and
  $\tau \models_C L = {?}  \Rightarrow \tau \models_D L = {?}$.
\end{theorem}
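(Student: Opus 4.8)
The plan is to reduce the whole statement to a single \emph{over-approximation} invariant on the decentralized estimate, after observing that the second implication is essentially the contrapositive of the first. Since the truth-domain is $\{\bot,?,\top\}$, once we know $\tau\models_D L = \top \Rightarrow \tau\models_C L = \top$ and $\tau\models_D L = \bot \Rightarrow \tau\models_C L = \bot$, the implication $\tau\models_C L = {?} \Rightarrow \tau\models_D L = {?}$ follows: if $\tau\models_D L$ were $\top$ or $\bot$, the first two implications would force $\tau\models_C L$ to be $\top$ or $\bot$, contradicting $\tau\models_C L = {?}$. So it suffices to treat the two definitive verdicts.

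The heart of the argument is the following over-approximation property of $\Delta_\delta$: if $q^\ast = \delta(q_0, \tau[0;k-1])$ is the state a centralized monitor reaches on the prefix $\tau[0;k-1]$, if $q^\ast \in \cQ$, and if $\mem(k) = (\sigma_k, s_k)$ is consistent with the global event $\tau(k)$ in the sense $\sigma_k = \tau(k) \cap \bigcup_{j \in s_k} \AP_j$, then $\delta(q^\ast, \tau(k)) \in \Delta_\delta(\cQ, s_k, \sigma_k)$. This is immediate from Definition~\ref{def:decentralized_mon}: take $\sigma' = \tau(k)$ and $q = q^\ast$ as witnesses. Thus $\Delta_\delta$ never discards the ``true'' centralized state.

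Next I would establish, as a global invariant maintained throughout the execution of Algorithms~\ref{algo:receive}, \ref{algo:updstate}, and~\ref{algo:dmon} on every monitor $M_i$, that (a) the stored state satisfies $q = \delta(q_0, \tau[0;t_{\rm last}-1])$ and (b) every memory entry $\mem(t') = (\sigma_{t'}, s_{t'})$ is consistent with $\tau$, i.e.\ $\sigma_{t'} = \tau(t') \cap \bigcup_{j \in s_{t'}} \AP_j$. Part (b) holds because a fresh local event is recorded as $(\tau(t)\cap \AP_i, \{i\})$ — consistent since $\{\AP_j\}$ partitions $\AP$ — and because the merge operator $\sqcup$ preserves consistency (intersection distributes over unions of the sets $\AP_j$). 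Given (a) and (b), iterating the over-approximation lemma along the loop of $\mathsf{update\_state}$ shows that the temporary set $\cQ$, initialized to $\{q\}$ and advanced through $\mem(t_{\rm last}),\dots,\mem(t)$, always contains $\delta(q_0,\tau[0;t'])$ after processing index $t'$. In particular, whenever $\cQ$ collapses to a singleton $\{q'\}$, that singleton must equal the true state, which is exactly what legitimizes the update $q \leftarrow q'$ and keeps (a) invariant; symmetrically, a state received with stamp $t_{\rm new}$ is correct because the sender maintained (a), which is where the induction becomes genuinely global.

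Finally, to close the theorem, suppose some monitor has a local state $\cQ$ with $\verdict_D(\cQ) = \top$ (the $\bot$ case is symmetric). By the invariant, $\cQ$ contains the true state $q^\ast = \delta(q_0, u)$ for the prefix $u$ of $\tau$ it has folded into its estimate, and by the definition of $\verdict_D$ every element of $\cQ$ — hence $q^\ast$ — has verdict $\top$. By the centralized-monitor correctness property this gives $u \in \good(L)$; since $\good(L)$ is closed under extension ($u\cdot\Sigma^\ast\subseteq L$ implies $u\cdot v\cdot\Sigma^\ast\subseteq L$ for every $v$) and $u$ is a prefix of $\tau$, we obtain $\tau \in \good(L)$, i.e.\ $\tau\models_C L = \top$; the same reasoning with $\bad(L)$ handles $\bot$. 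I expect the main obstacle to be part (a) of the invariant rather than the lemma itself: proving that every base state $q$ — including those imported from other monitors via messages carrying $(q',t_{\rm new})$ — is genuinely $\delta(q_0,\tau[0;t_{\rm last}-1])$ requires a simultaneous induction over the interleaved executions and communications of all $n$ monitors, with care that the clock $t_{\rm last}$ correctly tracks how much of $\tau$ has been folded into $q$.
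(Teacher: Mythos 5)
Your proposal is correct and follows essentially the same route as the paper's proof: the same key invariant (the true centralized state $\delta(q_0,\tau[0;t'])$ always belongs to the estimate $\cQ$), supported by the same facts (the over-approximation property of $\Delta_\delta$, which the paper splits into its Lemmas~\ref{contains_correct_state} and~\ref{inclusion_states}, and consistency of the memory under $\sqcup$, the paper's Lemma~\ref{correction_mem}), and the same conclusion via the definition of $\verdict_D$. You are in fact more explicit than the paper on three points it leaves implicit: deriving the ``?'' implication by contraposition, closing the argument with the extension-closure of $\good(L)$/$\bad(L)$, and flagging that correctness of states received from other monitors requires a global induction over all monitors' interleaved executions.
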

Soundness states that i) all definitive verdicts found by the decentralized monitoring algorithm are actual verdicts that would be found by a centralized monitor, having access to the global trace, and ii) decentralized monitors do not find more definitive verdicts ($\top$ or $\bot$) than the centralized one.
\begin{theorem}[Completeness]
\label{theo:completeness}
Let $L\subseteq \Sigma^\ast$ and $\tau \in \Sigma^\ast$, then $\tau\models_C L= \top/\bot \Rightarrow \exists \tau'\in\Sigma^\ast: \tau\cdot \tau'\models_D L= \top/\bot$.
\end{theorem}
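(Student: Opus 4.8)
The plan is to reduce completeness to a single claim about information aggregation: \emph{if some monitor ever holds the complete global history in its memory, then it necessarily recomputes the exact centralized state and hence emits the centralized verdict.} First I would make the reduction precise. Suppose $\tau \models_C L = \top/\bot$ and set $q^\ast = \delta(q_0,\tau)$, so $\verdict(q^\ast)\in\{\top,\bot\}$. By the sink property of definitive verdicts (property~ii of monitor synthesis: $\verdict(q)\in\{\top,\bot\}$ implies $\delta(q,\sigma)=q$ for all $\sigma$), the state $q^\ast$ is absorbing, so $\delta(q_0,\tau\cdot\tau') = q^\ast$ for \emph{every} extension $\tau'$; appending $\tau'$ therefore never disturbs the centralized verdict. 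It thus suffices to exhibit one extension $\tau'$, together with a schedule of the asynchronous communications, after which some monitor $M_i$ reaches a local state $\cQ = \{q^\ast\}$: then $\verdict_D(\cQ) = \verdict(q^\ast)\in\{\top,\bot\}$, and by the definition of $\models_D$ we get $\tau\cdot\tau' \models_D L = \top/\bot$.

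The enabling computational observation is that \emph{complete} memory forces the exact state. Evaluating the decentralized transition with $s = [1;n]$, and using $\bigcup_{j\in[1;n]}\AP_j = \AP$ and $\Sigma = 2^\AP$, the compatibility condition $\sigma = \sigma'\cap\bigcup_{j\in s}\AP_j$ collapses to $\sigma' = \sigma$, so $\Delta_\delta(\{q\},[1;n],\sigma) = \{\delta(q,\sigma)\}$, a singleton matching a centralized step. Consequently, if a monitor's memory satisfies $\mem(t') = (\tau(t'),[1;n])$ for every $t'$ in the range it scans, then $\mathsf{update\_state}$, started from the correct state (maintained by the invariant $q = \delta(q_0,\tau[0;t_{\rm last}-1])$), keeps $\cQ$ a singleton at each iteration and terminates with $q = q^\ast$; the main loop then checks $\verdict(q)$ and returns the definitive verdict.

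It remains to argue that such a complete memory is reachable for a suitable $\tau'$. Here I would exploit that $\choosemon$ is a bijection with $\choosemon^k(i)\ne i$ for $k\in[1;n-1]$, i.e.\ a single $n$-cycle, and that $\leadermon$ designates at least one leader. A leader circulates its memory along the cycle; every monitor that receives a partial memory sets $\mathit{rcv\_mem}$ and therefore forwards its \emph{merged} memory (now including its own local observations), so the travelling message behaves like a token that accumulates each component's observations within at most $n-1$ hops. Because the merge $\sqcup$ takes the componentwise union $(\sigma\cup\sigma',\,s\cup s')$ and the sets $\{\AP_i\}$ partition $\AP$, the merged entry for each time $t'$ becomes exactly $(\bigcup_i\tau_i(t'),[1;n]) = (\tau(t'),[1;n])$. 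Taking $\tau'$ long enough to supply the finitely many global-clock ticks these rounds require, and scheduling deliveries accordingly, yields a monitor with complete memory up to the current time, which by the previous step emits the verdict.

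The main obstacle is precisely this aggregation step: one must reason carefully within the event-decoupled, asynchronous communication model, showing that under a suitable schedule the token genuinely visits every component, that the forwarding guard $\mathit{rcv\_mem}\vee\leadermon(i)$ never stalls propagation, and that $\sqcup$ faithfully reconstructs each global event from the local fragments. A secondary subtlety is the interplay with the state-passing mechanism — monitors also forward determined states with timestamps $t_{\rm last}$ — but this can only make a verdict appear sooner and never invalidates the construction; I would remark on it and base the existence proof solely on the memory-circulation route.
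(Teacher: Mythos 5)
Your proposal is correct and takes essentially the same approach as the paper's proof: your observation that a full index set $[1;n]$ collapses $\Delta_\delta$ to the exact centralized state is the paper's Lemma~\ref{knows_everything}, your merge-correctness claim under the partition of $\AP$ is its Lemma~\ref{correction_mem}, and your circulating-token argument (enabled by at least one leader and the cyclic $\choosemon$, with index sets strictly growing until they reach $[1;n]$) is precisely the paper's unnamed progress lemma together with its concluding remark that Theorem~\ref{theo:completeness} follows by the same principle as the bounded-delay case, minus the bound. The only difference is cosmetic: you make explicit the absorbing-verdict-state reduction that the paper leaves implicit.
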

Completeness states that all verdicts found by the centralized algorithm for some global trace $\tau$ will be eventually found by the decentralized algorithm on a continuation $\tau\cdot\tau'$.
Generally, when the rate of communication between monitors (compared to the reception of events) is unknown or when not all monitors are leaders, it is not possible to determine the maximal length of $\tau'$.
When monitors communicate at the same rate as monitors receive events and all monitors are leaders (i.e., they can send message spontaneously -- $\leadermon(i) = \mathit{true}$, for any $i\in[1;n]$), then, as was the case in~\cite{BauerF12}, we can bound the maximal length of $\tau'$ by $n$ (the number of components in the system), which also represents the maximal delay, induced by decentralized monitoring.
\begin{theorem}[Completeness with bounded delay]
\label{theo:completeness2}
Let $L\subseteq \Sigma^\ast$ and $\tau \in \Sigma^\ast$, if monitors receive events and communicate at the same rate and if all monitors are leaders, then $\tau\models_C L= \top/\bot \Rightarrow \exists \tau'\in\Sigma^\ast: |\tau'| \leq n \wedge \tau\cdot \tau'\models_D L= \top/\bot$.
\end{theorem}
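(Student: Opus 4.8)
The plan is to combine two ingredients: a \emph{determinism} observation showing that complete information about a time step makes the decentralized transition coincide with the centralized one, and a \emph{propagation} lemma bounding how fast complete information about any single time step reaches some monitor under the two hypotheses. Throughout, write $t = |\tau|-1$ and $q^\ast = \delta(q_0,\tau)$, so that $\tau\models_C L=\top/\bot$ means exactly $\verdict(q^\ast)\in\{\top,\bot\}$.

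For the first ingredient, I would record that when the component set is full, $\Delta_\delta(\{q\},[1;n],\sigma) = \{\delta(q,\sigma)\}$: since $\bigcup_{j\in[1;n]}\AP_j = \AP$, the constraint $\sigma = \sigma'\cap\AP$ forces $\sigma'=\sigma$, so exactly one centralized transition is compatible. Consequently, if at some point a monitor's memory contains $\mem(t')=(\tau(t'),[1;n])$ for every $t'$ in the range scanned by $\mathsf{update\_state}$ from a known singleton state, then $\cQ$ remains a singleton at each iteration and $q$ is driven through the centralized run $q_0,\delta(q_0,\tau(0)),\dots$ Using the monitor-synthesis property that verdict states are absorbing ($\delta(q,\sigma)=q$ once $\verdict(q)\in\{\top,\bot\}$), the first time $q$ becomes a verdict state it stays one, and in any case $q$ reaches $q^\ast$; hence the verdict check in the main loop returns $\verdict(q^\ast)$, giving $\tau\cdot\tau'\models_D L=\verdict_D(\{q^\ast\})=\verdict(q^\ast)=\tau\models_C L$.

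For the second ingredient, I would exploit that $\choosemon$ is a single $n$-cycle (guaranteed by $\choosemon^k(i)\neq i$ for $k\in[1;n-1]$) and that, since all monitors are leaders, every monitor \emph{unconditionally} appends its chunk $\mem(t_{\rm last},t)$ at each round (line 13 fires because $\leadermon(i)=\true$). Together with the ``same rate'' hypothesis (one event per communication round), I would prove by induction on $r$ that $r$ rounds after a global event $\tau(t_0)$ is produced, the observation of component $i$ has reached monitor $M_{\choosemon^r(i)}$; the inductive step is precisely the merge $\sqcup$, whose union on the component-index field appends the next predecessor along the cycle. Accumulating, monitor $M_j$ then holds the pieces of all components $i$ with $\choosemon^k(i)=j$ for some $k\le r$; since every component lies within hop-distance $n-1$ of $j$ along the $n$-cycle, each monitor holds complete information $\mem(t_0)=(\tau(t_0),[1;n])$ after at most $n-1$ rounds.

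Combining the two: starting from a trace with $\tau\models_C L=\top/\bot$ and running $n-1$ further synchronous rounds on any continuation, the propagation lemma applied to $t_0=t$ (and \emph{a fortiori} to every earlier $t_0\le t$, which have had at least as many rounds) guarantees that some monitor's memory holds complete information about $\tau(0),\dots,\tau(t)$; by the first ingredient, $\mathsf{update\_state}$ on that monitor returns $\verdict(q^\ast)$. The appended events used are $\tau(t+1),\dots,\tau(t+n-1)$, so $|\tau'|\le n-1\le n$. The main obstacle is the propagation lemma: the careful bookkeeping of which components each monitor holds after $r$ rounds in the presence of the one-round send/receive delay (which fixes the exact constant, here absorbed into the slack of the $\le n$ bound), together with verifying that entries at indices $t_{\rm last},\dots,t$ persist in $\mem$ long enough to be completed, while newly arriving events only extend $\mem$ at higher indices and therefore never block the monitor from reaching $q^\ast$.
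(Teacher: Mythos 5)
Your decomposition is the same as the paper's: your ``determinism'' observation is exactly Lemma~\ref{knows_everything}, and your propagation step plays the role of the paper's final lemma about information travelling along the $\choosemon$ cycle. The genuine gap is in the propagation induction itself, and it is precisely the obstacle you flag but do not close. As stated --- ``$r$ rounds after $\tau(t_0)$ is produced, component $i$'s observation sits in the memory of $M_{\choosemon^r(i)}$'' --- the claim is false. The failure mode is not newly arriving events (those indeed only extend $\mem$ at higher indices and are harmless), but the \emph{other} way the algorithm makes progress: a monitor $M_j$ on the forwarding path may receive a state message $(q',t_{\rm new})$ with $t_{\rm new}>t_0$, set $t_{\rm last}^j>t_0$, discard its entries at index $t_0$, and from then on forward only $\mem(t_{\rm last}^j,t^j)$ at line 13 of the main loop, which no longer contains index $t_0$. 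At that hop your inductive invariant breaks, and the piece you are tracking never reaches the remaining monitors in the form your lemma requires.

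The repair is to prove a \emph{disjunctive} invariant, which is exactly how the paper phrases its lemma: after $M_i$ sends to $M_j=\choosemon(i)$, either $t^j_{\rm last}>t_0$, or $\mem_j(t_0)=(\sigma_j,s_j)$ with $s_i\subset s_j$ strictly. Strictness holds because $j$ belongs to the old index set $s'$ of $\mem_j(t_0)$ (under the same-rate hypothesis $M_j$ has already recorded its own observation of time $t_0$), while $j\notin s_i$ by cyclicity of $\choosemon$ --- unless $s_i=[1;n]$, in which case Lemma~\ref{knows_everything} shows $M_i$ would already have determined the state and sent some $t_{\rm new}>t_0$, contradicting the case assumption. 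The first disjunct is not a loss but a shortcut: by the invariant $q=\delta(q_0,\tau[0;t_{\rm last}-1])$ established in the proof of soundness, together with absorption of verdict states, any monitor with $t_{\rm last}>t_0=|\tau|-1$ holds exactly $q^\ast$, so its verdict check fires at once. Since the second disjunct can occur at most $n-1$ times before $s=[1;n]$ (whereupon Lemma~\ref{knows_everything} yields a singleton and the correct state), either way some monitor returns $\verdict(q^\ast)$ within $n-1$ synchronous rounds, giving $|\tau'|\leq n$. With this strengthening, your two ingredients assemble into essentially the paper's own proof.
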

\section{Implementation and Experimental Results}
\label{sec:expe}
%%%%%%%%%%%%%%%%%%%%%%%%%%%%%%%%%%%%%%%%%%%%%
%%%%%%%%%%%%%%%%%%%%%%%%%%%%%%%%%%%%%%%%%%%%%
%
We present {\toolnameshort} a new benchmark tool used to evaluate decentralized monitoring (\secref{sec:expe:tool}) using specifications given as LTL formulas (\secref{sec:expe:ltl}) and specifications patterns (\secref{sec:expe:pat}).
Then, we draw conclusions from our experiments (\secref{sec:expe:conc}).
Further experimental results are available at~\cite{decenttool2}.
%
%%%%%%%%%%%%%%%%%%%%%%%%%%%%%%%%%%%%%%%%%%%%%
\subsection{{\toolnameshort}: a Benchmark for Generalized Decentralized Monitoring}
\label{sec:expe:tool}
%%%%%%%%%%%%%%%%%%%%%%%%%%%%%%%%%%%%%%%%%%%%%
%
\begin{figure}[t]
\centering
\includegraphics[scale=0.9]{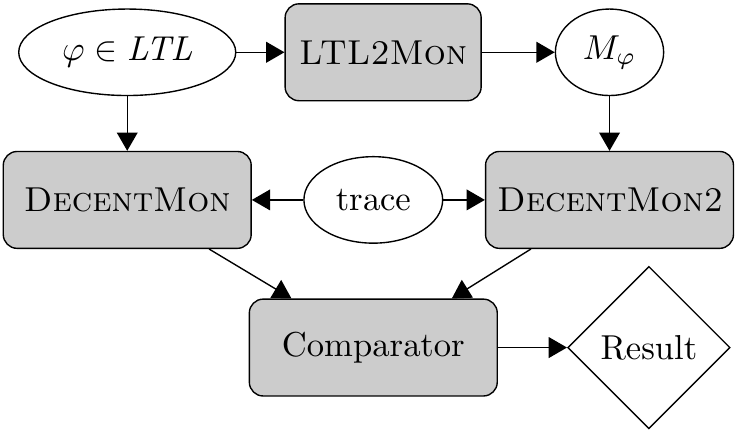}
\caption{Experimental setup for comparing \textsc{DecentMon} and {\toolnameshort}}
\label{fig:expe_setup}
\end{figure}
{\toolnameshort} is an benchmark dedicated to decentralized monitoring.
{\toolnameshort} that consists of:
\begin{itemize}
\item
a completely redeveloped version of \textsc{DecentMon}~\cite{BauerF12},
\item
an implementation of the decentralized monitoring algorithm presented in \secref{sec:com:algo},
\item
a trace generator, and
\item
an LTL-formula generator.
\end{itemize}
{\toolnameshort} consists of 1,300 LLOC, written in the functional programming language OCaml.
It can be freely downloaded and run from~\cite{decenttool2}.

The system takes as input multiple traces (that can be automatically generated), corresponding to the behavior of a distributed system, and a specification given by a deterministic finite-state automaton.
Then the specification is monitored against the traces in two different modes: a) by merging the traces to a single, global trace and then using a ``centralized monitor" for the specification (i.e., all components send their respective events to the central monitor who makes the decisions regarding the trace), b) by using the decentralized version introduced in~\cite{BauerF12}, and c) by using the decentralized approach introduced in this paper (i.e., each trace is read by a local monitor in the two last cases).
To favor the centralized case, monitors send their events only if they differ from the previous one, which decreases the number of exchanged messages.
We have evaluated the three different monitoring approaches (i.e., centralized vs. LTL-decentralized vs generalized-decentralized) using several set-ups described in the remainder of this section.
To compare monitoring metrics obtained with the decentralized algorithm in~\cite{BauerF12} and the one in this paper, we used \textsc{LTL2Mon}~\cite{ltl2mon}, to convert LTL formulas into automata-based (centralized) monitors.
For our comparison purposes, we used results on common LTL formulas and traces using the experimental setup depicted in \figref{fig:expe_setup}.
For each of the metric mentioned in the following sections, ratios are obtained by dividing the value obtained in the centralized case over the value obtained in the decentralized case.

To compare with the decentralized monitoring algorithm obtained in~\cite{BauerF12}, the emission of events occurs at the same rate as the communication between monitors.
Recall that it was assumed in~\cite{BauerF12} whereas our monitoring algorithm allows different ratios.

Each line of the following arrays is obtained by conducting 1,000 tests, each with a fresh trace of 1,000 events and specification.
We use the same architecture as in the running example.
Note that benchmarks with different architectures and rates of communication/event-emission were also conducted, and are available from~\cite{decenttool2}.

For the following monitoring metrics, we measure the size of the elements exchanged by monitors as follows.
Suppose we monitor an LTL formula $\varphi$ over $\AP$ with an automaton defined over the alphabet $\Sigma=2^\AP$ with set of states $Q$: each event is of size $\lceil \log_2 |\Sigma| \rceil$, each state is of size $\lceil \log_2 |Q| \rceil$, each time unit $t$ is of size $\lceil \log_2 (t) \rceil$, each formula is of size $n \times \lceil \log_2 (| \AP | + |\mathit{Op}|)\rceil$ where $n$ is the number of symbols in the formula, $\AP$ is the set of atomic propositions of the the formula and $\mathit{Op}=\{\top,\bot,\vee,\wedge,\neg,\Rightarrow,\Leftrightarrow,$ $\ltlX,\ltlF,\ltlG,\ltlU,\ltlR,\ltlW,\ltlP,\#,$ $(,)\}$ is the set of symbols in formulas handled by \textsc{DecentMon}.
Then in the following tables, the following metrics are used:
\begin{itemize}
\item
$\# \mathrm{msg.}$, the total number of exchanged messages,
\item
$|\mathrm{msg.}|$, the total size of exchanged messages (in bits),
\item
$|\mathrm{trace}|$ the size of the prefix of the trace needed to obtain a verdict,
\item
$\mathrm{delay}$, the number of additional events needed by the decentralized algorithm to reach a verdict compared to the centralized algorithm,
\item
$|\mathrm{mem}|$, the memory in bits needed for the structures (i.e., formulas for~\cite{BauerF12}, partial function $\mem$ plus state for our algorithm).
\end{itemize}
%
%%%%%%%%%%%%%%%%%%%%%%%%%%%%%%%%%%%%%%%%%%%%%%%
\subsection{Benchmarks for Randomly Generated formulas}
\label{sec:expe:ltl}
%%%%%%%%%%%%%%%%%%%%%%%%%%%%%%%%%%%%%%%%%%%%%%%
%
\begin{table}[t]
\centering
\caption{Experimental results for random formulas}
\setlength{\extrarowheight}{1mm}
\subfloat[Number and size of messages\label{tab:bench:random1}]{
\begin{tabular}{|c|r|r|r|r|r|r|r|r|r|r|r|}
\hline
$|\varphi|$ & \multicolumn{3}{c|}{$\# \mathrm{msg.}$} & \multicolumn{3}{c|}{$|\mathrm{msg.}|$} & \multicolumn{2}{c|}{$\# \mathrm{msg.}$ ratio} & \multicolumn{2}{c|}{$|\mathrm{msg.}|$ ratio} \\
\hline 
1 & 3.49 & 1.13 & 3.73 & 10.4 & 87.2 & 23.8 & 0.32 & 1.06 & 8.31 & 2.27\\
\rowcolor[gray]{0.9}
2  & 4.04 & 1.89 & 5.4 & 12.1 & 316 & 39.2 & 0.46 & 1.33 & 26.0 & 3.23\\
3 & 9.33 & 5.34 & 16.9 & 27.9 & 3,220 & 166 & 0.57 & 1.37 & 115 & 4.5\\
\rowcolor[gray]{0.9}
4 & 25.1 & 12.6 & 35.9 & 75.3 & 8,430 & 350 & 0.5 & 1.27 & 112 & 4.16\\
5 & 39.7 & 21.9 & 71.0 & 119 & 36,500 & 775 & 0.55 & 1.33 & 306 & 4.86\\
\rowcolor[gray]{0.9} 6 & 90.9 & 47.3 & 116 & 272 & 284,000 & 1,180 & 0.52 & 1.23 & 1,040 & 4.21\\
\hline
\end{tabular}
}

\subfloat[Trace length, delay, and memory size\label{tab:bench:random2}]{
\begin{tabular}{|c|r|r|r|r|r|r|r|r|r|}
\hline 
$|\varphi|$ & \multicolumn{3}{c|}{$|\mathrm{trace}|$} & \multicolumn{2}{c|}{$\mathrm{delay}$} & \multicolumn{2}{c|}{$|\mathrm{mem}|$} \\
\hline 
1 & 1.33 & 1.66 & 2.61 & 0.32 & 1.28 & 44.2 & 7.93\\
\rowcolor[gray]{0.9}
2 & 1.67 & 2.15 & 3.2 & 0.48 & 1.53 & 156 & 9.72\\
3 & 5.21 & 5.79 & 8.8 & 0.58 & 1.6 & 458 & 10.4 \\
\rowcolor[gray]{0.9}
4 & 15.7 & 16.4 & 19.3 & 0.7 & 1.66 & 1,100 & 11.3\\
5 & 25.5 & 26.4 & 36.3 & 0.82 & 1.79 & 2630 & 12.4\\
\rowcolor[gray]{0.9}
6 & 59.4 & 60.2 & 63.2 & 0.76 & 1.66 & 5,830 & 12.0\\
\hline 
\end{tabular}
}
\label{tab:bench:random}
\end{table}
For each size of formula (from 1 to 6), {\toolnameshort} randomly generated 1,000 formulas in the architecture described in Example~\ref{ex:cm1}.
How the three monitoring
 approaches compared on these formulas can be seen in
Tables~\ref{tab:bench:random1} and~\ref{tab:bench:random2}.
The first column of these tables shows the size of the monitored LTL formulas. 
Note, our system measures formula size in terms of operator
entailment\footnote{Experiments show that operator entailment is more representative of how difficult it is to
  progress it in a decentralized manner. formulas of size above $6$ are not realistic in practice.} inside it (state formulas excluded), e.g., $\ltlF a \wedge \ltlG (b\wedge c)$ is of size~$2$.

For example, the last line in Table~\ref{tab:bench:random1} says that we
monitored 1,000 randomly generated LTL formulas of size 6.
On average, monitors using the centralized algorithm, the decentralized algorithm using LTL formulas, and the decentralized algorithm using automata, exchanged 90.9, 47.3, 116 messages, had messages of size 272 bits, 284,000 bits, 1180 bits, respectively.
The last two pairs of columns show the ratios of the previous metrics obtained in the decentralized cases over the centralized one.
For instance, the last line in Table~\ref{tab:bench:random1} says that the decentralized algorithm with LTL formulas induced 0.52 times the number of messages of the centralized algorithm, whereas the decentralized algorithm with automata induced 1.23 times messages.
Message ratios and metrics in Table~\ref{tab:bench:random2} read similarly.
%
%
%%%%%%%%%%%%%%%%%%%%%%%%%%%%%%%%%%%%%%%%
\subsection{Benchmarks for Patterns of formulas}
\label{sec:expe:pat}
%%%%%%%%%%%%%%%%%%%%%%%%%%%%%%%%%%%%%%%%
%
We also conducted benchmarks with more realistic specifications, obtained from specification patterns~\cite{302672}.
Actual formulas underlying the patterns are available at~\cite{patternswebsite} and recalled in~\cite{decenttool2}.
To generate formulas, we proceeded as follows. 
For each pattern, we randomly select one of its associated formulas.
Such a formula is ``pa\-rametrized'' by some atomic
propositions.
To obtain randomly generated formula, using the distributed alphabet, we randomly instantiate atomic propositions.
\begin{table}[t]
\caption{Experiments for specification patterns}
\setlength{\extrarowheight}{1mm}
\centering
\subfloat[Number and size of messages\label{tab:bench:pattern1}]{
\begin{tabular}{|c|r|r|r|r|r|r|r|r|r|r|r|}
\hline 
$|\varphi|$ & \multicolumn{3}{c|}{$\# \mathrm{msg.}$} & \multicolumn{3}{c|}{$|\mathrm{msg.}|$} & \multicolumn{2}{c|}{$\# \mathrm{msg.}$ ratio} & \multicolumn{2}{c|}{$|\mathrm{msg.}|$ ratio} \\
\hline 
abs & 7.33 & 4.46 & 17.9 & 22 & 2,050 & 194 & 0.6 & 2.44 & 93.6 & 8.85\\
\rowcolor[gray]{0.9}
exis & 43.9 & 19.7 & 64.2 & 131 & 10,200 & 663 & 0.45 & 1.46 & 77.6 & 5.03\\
bexis & 65.3 & 31.6 & 379 & 19.6 & 1,170,000 & 5,450 & 0.48 & 2.17 & 5,970 & 10.4\\
\rowcolor[gray]{0.9}
univ & 10.3 & 5.92 & 30.9 & 31 & 2,750 & 379 & 0.57 & 2.98 & 88.6 & 12.2\\
prec & 77.6 & 25.4 & 68.1 & 232 & 8,710 & 648 & 0.32 & 1.29 & 37.4 & 4.11\\
\rowcolor[gray]{0.9}
resp & 959 & 425 & 1,070 & 2,870 & 337,000 & 9,760 & 0.44 & 1.12 & 117 & 3.39\\
precc & 7.68 & 4.81 & 18.9 & 23. & 5,180 & 218 & 0.62 & 2.47 & 225 & 9.53\\
\rowcolor[gray]{0.9}
respc & 643 & 381 & 732 & 1,920 & 719,000 & 6,680 & 0.59 & 1.13 & 372 & 3.46\\
consc & 490 & 201 & 469 & 1,470 & 337,000 & 4,260 & 0.41 & 1.13 & 229 & 3.43\\
\hline 
\end{tabular}
}

\subfloat[Trace length, delay, and memory size\label{tab:bench:pattern2}]{
\begin{tabular}{|c|r|r|r|r|r|r|r|r|r|}
\hline 
$|\varphi|$ & \multicolumn{3}{c|}{$|\mathrm{trace}|$} & \multicolumn{2}{c|}{$\mathrm{delay}$} & \multicolumn{2}{c|}{$|\mathrm{mem}|$} \\
\hline 
abs & 3.89 & 4.55 & 5.66 & 0.66 & 1.77 & 496 & 12.4\\
\rowcolor[gray]{0.9}
exis & 28.2 & 28.9 & 29.9 & 0.65 & 1.68 & 376 & 11.7\\
bexis & 42.6 & 43.1 & 116 & 0.581 & 1.56 & 28,200 & 14.4\\
\rowcolor[gray]{0.9}
univ & 5.96 & 6.73 & 7.76 & 0.76 & 1.79  & 498 & 13.0\\
prec & 50.8 & 51.6 & 35.5 & 0.81 & 1.66 & 663 & 11.5\\
\rowcolor[gray]{0.9}
resp & 638 & 639 & 639 & 0.32 & 0.7 & 1,540 & 8.61\\
precc & 4.11 & 4.82 & 5.72 & 0.7 & 1.64 & 1,200 & 11.6\\
\rowcolor[gray]{0.9}
respc & 427 & 428 & 428 & 0.59 & 1.16 & 4,650 & 10.7\\
consc & 325 & 325 & 326 & 0.6 & 1.35 & 2,720 & 10.8\\
\hline 
\end{tabular}
}
\end{table}
Results are reported in
Tables~\ref{tab:bench:pattern1} and~\ref{tab:bench:pattern2} for each kind of patterns (absence,
existence, bounded existence, universal, precedence, response, precedence chain, response chain, constrained chain), we
generated again 1,000 formulas, monitored over the same architecture as
used in Example~\ref{ex:cm1}.

%
%%%%%%%%%%%%%%%%%%%%%%%%%%%%%%%%%%%%%%%%%%%
\subsection{Conclusions from the Experiments and Discussion}
\label{sec:expe:conc}
%%%%%%%%%%%%%%%%%%%%%%%%%%%%%%%%%%%%%%%%%%%
%
The number and size of exchanged messages when monitoring with the decentralized algorithm using automata are in the same order of magnitude (and most often lower) as when monitoring with the centralized algorithm.
Comparing the decentralized monitoring algorithms, the number of messages when using LTL formulas is always lower but the size of messages is much bigger in that case (sometimes by orders of magnitude).
Delays are always greater when using automata but they remain in the same order of magnitude.
Please also note that we have conducted benchmarks where our algorithm uses only one leader monitor, which tends to augment the delay (whereas in the algorithm using LTL formulas monitors are not constrained) - see the discussion below.
Regarding the size of memory, the algorithm using automata is always more efficient by several orders of magnitude when the size of formulas grows.
\paragraph{Efficiency of Implementation.}
Another interesting feature of our algorithm is its usability in implementation.
To illustrate this point, we measured the real memory consumption of the two (reasonably optimized) implementations of benchmarks (in the same programming language), see Table~\ref{tab:perf}.
\begin{table}[t]
\caption{Evaluation of memory consumption and execution time}
\label{tab:perf}
\centering
\begin{tabular}{|c|r|r|r|r|r|r|}
\hline 
&	$\# \mathrm{msg.}$ & $|\mathrm{msg.}|$ & $|\mathrm{mem| \; (Mo)}$ & $\mathrm{time\; (s)}$ \\
\hline 
\textsc{DecentMon} & $367$ & $21,667,225$ & $157,845$ & $4.724$ \\%& $0m4.833s$  \\
\rowcolor[gray]{0.9}
\textsc{DecentMon2} & $3,258$ & $59,628$ & $18$ & $0.064$ \\%& $0m0.110s$ \\
\hline 
\end{tabular}
\end{table}
We only report the results when monitoring formulas of type bounded existence, over alphabet $\{a\},\{b\},\{c\}$, with a trace of 10,000 events.
For other kinds of formulas, the trend is similar.
As expected, progression is certainly more costly and thus less appropriate for monitoring.
Moreover, the size of messages (and hence the size of formulas) monitors have to handle becomes unmanageable quite rapidly.
\paragraph{Influence of the number of leaders.}
We also made some experiments (omitted for space reasons) regarding the influence of the number of leader monitors.
It turns out that, as the number of leaders augments in the system, the number of messages augments, whereas the delay induced by decentralized monitoring reduces.
For instance, by allowing all monitors to communicate spontaneously (i.e., with $\leadermon(i)= \mathit{true}$ for any $i\in [1;n]$), we observed that, for several patterns of formulas, i) a shorter average delay and less memory consumption by a factor of 1.5, and ii) the total size of messages was, in average, multiplied by 1.7 while their number was multiplied by 2 (thus the average size of messages decreased).
\section{Future Work}
\label{sec:fw}
%%%%%%%%%%%%%%%%%%%%%%%%%%%%%%%%%%%
%%%%%%%%%%%%%%%%%%%%%%%%%%%%%%%%%%%
%
Experiments in \secref{sec:expe} indicate that some parameters of our monitoring algorithm such as the frequency of communication, the number of leader monitors, and the communication architecture, influence monitoring metrics.
Our experiments allowed to sketch some empiric laws but a deeper understanding of the influence of each of these parameters is certainly needed to optimize decentralized monitoring on specific architectures.

Another line of research is related to security in decentralized monitoring, when for instance monitoring security-related properties, or when the property involves atomic propositions with confidential information.
Decentralized monitoring imposes local monitors to communicate, for instance over some network.
Exchanged messages contain information about the observation or state of monitors w.r.t. the property of interest.
Some confidentiality issues may arise.
Thus, an interesting question is to determine how and to what extent monitors could encode their local observation, transmit the encoded information, so that the message benefits (in terms of gained information) to the recipient, but not to an external observer.

We considered an architecture where communication was constrained by $\choosemon$ which can, for instance, reflect architectural constraints.
We will determine how to optimize this function according to the monitored language, the memory content, or the current state of local monitors so as to minimize e.g., exchanged messages.
%
%As the presented algorithm has reasonable performance, we believe it can be used in concrete monitoring case studies where efficiency is paramount to preserve the performance of the system under scrutiny.
%
%%%%%%%%%%%%%%%%%%%%%%%%%%%%%%%%%%%%%%%%%%%%%%
\paragraph{Acknowledgment.}
The authors would like to thank Jean-Claude Fernandez and Susanne Graf for their comments on a preliminary version of this report.
%%%%%%%%%%%%%%%%%%%%%%%%%%%%%%%%%%%%%%%%%%%%%%
%
\bibliographystyle{splncs}
\bibliography{biblio}
\newpage
\appendix
\section{Proofs}
\label{sec:proofs}
%%%%%%%%%%%%%%%%%%%%%%%%%%%%%%%%%%%%%%%%%%%%%%%%%%%%%%%%%%%%%%%%
%%%%%%%%%%%%%%%%%%%%%%%%%%%%%%%%%%%%%%%%%%%%%%%%%%%%%%%%%%%%%%%%
%
Let us define  a projection function $p:[1;n] \times \Sigma\rightarrow \Sigma$ s.t. $p(s,\sigma) = \sigma \cap (\bigcup_{i \in s}AP_i)$.

The following lemma says that the state obtained by applying the transition function $\delta$ of the centralized monitor always belongs to the set of states obtained by applying the transition function $\Delta_\delta$ of the corresponding decentralized monitor.
\begin{lemma} \label{contains_correct_state}
$\forall s \subseteq [1;n], \forall \sigma \in \Sigma, \forall q \in Q : \delta(q,\sigma) \in \Delta_\delta(\{q\},s,p(s,\sigma))$.
\end{lemma}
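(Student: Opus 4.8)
The plan is to prove the membership directly by unfolding the definition of the decentralized transition function $\Delta_\delta$ from Definition~\ref{def:decentralized_mon} and exhibiting a suitable global event as a witness. Fix arbitrary $s \subseteq [1;n]$, $\sigma \in \Sigma$, and $q \in Q$. Since the first argument of $\Delta_\delta$ here is the singleton $\{q\}$, the existential quantifier $\exists \bar{q} \in \cQ$ in the definition collapses to the single choice $\bar{q} = q$, so that
\[
\Delta_\delta(\{q\},s,p(s,\sigma)) = \{q'\in Q \mid \exists \sigma' \in \Sigma : p(s,\sigma) = \sigma' \cap \textstyle\bigcup_{j\in s} \AP_j \;\wedge\; q' = \delta(q,\sigma')\}.
\]

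The key step is to observe that $\sigma$ itself is an admissible witness for $\sigma'$. First, the compatibility condition holds trivially: by the definition of the projection $p$, we have $p(s,\sigma) = \sigma \cap \bigl(\bigcup_{i\in s}\AP_i\bigr)$, which is exactly $\sigma' \cap \bigcup_{j\in s}\AP_j$ when $\sigma' = \sigma$. Second, taking $\sigma' = \sigma$ yields $\delta(q,\sigma') = \delta(q,\sigma)$, so the candidate state $q' = \delta(q,\sigma)$ satisfies both conjuncts. Hence $\delta(q,\sigma)$ belongs to $\Delta_\delta(\{q\},s,p(s,\sigma))$, as required, and since $s$, $\sigma$, and $q$ were arbitrary the universally quantified statement follows.

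I do not expect any genuine obstacle here: the lemma is essentially a sanity check that the decentralized transition function over-approximates the centralized one, and the only content is that the projection $p(s,\sigma)$ is compatible (in the sense of the definition of $\Delta_\delta$) with the true global event $\sigma$ that produced it. The single point deserving care is to keep track of the two distinct roles played by $\sigma$: it is both the concrete global event fed to the centralized $\delta$ and, through its projection $p(s,\sigma)$, the partial observation used to compute the decentralized successor set. Choosing $\sigma' = \sigma$ reconciles these two roles, which is precisely what makes the centralized successor reappear inside the decentralized successor set.
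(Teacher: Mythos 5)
Your proof is correct and matches the paper's approach: the paper dismisses this lemma as ``direct, by definition of $\Delta_\delta$'', and your argument is exactly that unfolding, with $\sigma' = \sigma$ as the witness satisfying the compatibility condition $p(s,\sigma) = \sigma \cap \bigcup_{j\in s}\AP_j$. You have merely made explicit the details the paper leaves implicit.
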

\begin{proof}
Direct, by definition of $\Delta_\delta$ (Definition~\ref{def:decentralized_mon}).
\end{proof}
The next lemma states that the function $\cQ \mapsto \Delta_\delta(\cQ,s,\sigma)$ is monotonic w.r.t. $\subseteq$.
\begin{lemma}
\label{inclusion_states}
$\forall \cQ_1 \subseteq \cQ_2 \subseteq \cQ, \forall \sigma \in \Sigma, \forall s \subseteq [1;n] : \Delta_\delta(\cQ_1,s,\sigma) \subseteq \Delta_\delta(\cQ_2,s,\sigma)$.
\end{lemma}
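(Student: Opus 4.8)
The plan is to prove the inclusion directly by unfolding the definition of $\Delta_\delta$ from Definition~\ref{def:decentralized_mon} and exploiting the fact that membership in $\Delta_\delta(\cQ,s,\sigma)$ is expressed by an existential quantifier ranging over $q\in\cQ$. Since enlarging $\cQ$ can only add candidate witnesses, the set of reachable states cannot shrink. Concretely, I would fix an arbitrary element $q'\in\Delta_\delta(\cQ_1,s,\sigma)$ and aim to show $q'\in\Delta_\delta(\cQ_2,s,\sigma)$, which establishes the desired inclusion.

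The key step is as follows. By definition of $\Delta_\delta$, the assumption $q'\in\Delta_\delta(\cQ_1,s,\sigma)$ yields the existence of some global event $\sigma'\in\Sigma$ and some state $q\in\cQ_1$ such that $\sigma = \sigma'\cap\bigcup_{j\in s}\AP_j$ and $q' = \delta(q,\sigma')$. Because $\cQ_1\subseteq\cQ_2$, the same state $q$ also satisfies $q\in\cQ_2$. Therefore the very same pair $(\sigma',q)$ serves as a witness for membership of $q'$ in $\Delta_\delta(\cQ_2,s,\sigma)$: the condition $\sigma = \sigma'\cap\bigcup_{j\in s}\AP_j$ does not involve $\cQ$ at all, and $q' = \delta(q,\sigma')$ is unchanged. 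Hence $q'\in\Delta_\delta(\cQ_2,s,\sigma)$, and since $q'$ was arbitrary, $\Delta_\delta(\cQ_1,s,\sigma)\subseteq\Delta_\delta(\cQ_2,s,\sigma)$.

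I do not anticipate any genuine obstacle here: the statement is a routine monotonicity property, following immediately from the monotonicity of existential quantification in its domain. The only point worth care is the clean separation of the two conjuncts in the definition of $\Delta_\delta$ — the compatibility constraint $\sigma = \sigma'\cap\bigcup_{j\in s}\AP_j$ depends only on $s$, $\sigma$, and $\sigma'$, whereas the membership constraint $q\in\cQ$ is the sole place where $\cQ$ enters. Making this separation explicit is what makes the witness transfer from $\cQ_1$ to $\cQ_2$ transparent, and it is the essence of the (short) argument.
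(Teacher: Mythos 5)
Your proof is correct and follows essentially the same argument as the paper's: take an arbitrary element of $\Delta_\delta(\cQ_1,s,\sigma)$, extract the witnesses $\sigma'$ and $q\in\cQ_1$ from the definition of $\Delta_\delta$, and observe that the same witnesses work for $\cQ_2$ since $\cQ_1\subseteq\cQ_2$. Your write-up is in fact slightly cleaner than the paper's, which contains a typo in its final line (it concludes membership in $\Delta_\delta(\cQ_1,s,\sigma)$ where $\Delta_\delta(\cQ_2,s,\sigma)$ is meant).
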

\begin{proof}
Consider $q_2 \in \Delta_\delta(\cQ_1,s,\sigma)$. By definition of $\Delta_\delta$, $\exists \sigma' \in \Sigma, \exists q_1 \in \cQ_1 : \sigma = p(s,\sigma') \wedge q_2 = \delta(q_1,\sigma')$.
Since $\cQ_1 \subseteq \cQ_2$, $q_1 \in \cQ_2$.
Therefore, $q_2 \in \Delta_\delta(\cQ_1,s,\sigma)$.
\end{proof}
The next lemma states that, when a monitor knows the events of all other monitors, the locally estimated global state is the actual one.
\begin{lemma}
\label{knows_everything}
If $s = [1;n]$, then $\forall \sigma \in \Sigma, \forall q \in Q : \Delta(s,\{q\},p_s(\sigma)) = \{\delta(q,\sigma)\}$.
\end{lemma}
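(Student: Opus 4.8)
The plan is to unfold Definition~\ref{def:decentralized_mon} and reduce the claim to the fact that, when $s=[1;n]$, the projection $p(s,\cdot)$ acts as the identity on $\Sigma$. The key preliminary observation is that, since $\{\AP_i \mid i\in[1;n]\}$ partitions $\AP$, we have $\bigcup_{j\in[1;n]}\AP_j = \AP$; consequently, for every $\sigma\in\Sigma=2^{\AP}$ one has $p([1;n],\sigma) = \sigma \cap \AP = \sigma$, using $\sigma\subseteq\AP$. Hence the asserted equality is equivalent to $\Delta_\delta(\{q\},[1;n],\sigma) = \{\delta(q,\sigma)\}$.

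I would then prove the two inclusions. For $\{\delta(q,\sigma)\} \subseteq \Delta_\delta(\{q\},[1;n],\sigma)$, I simply invoke Lemma~\ref{contains_correct_state} with $s=[1;n]$, which gives $\delta(q,\sigma)\in\Delta_\delta(\{q\},[1;n],p([1;n],\sigma)) = \Delta_\delta(\{q\},[1;n],\sigma)$. For the reverse inclusion, I unfold the definition: a state $q'$ lies in $\Delta_\delta(\{q\},[1;n],\sigma)$ iff there exists $\sigma'\in\Sigma$ with $\sigma = \sigma' \cap \bigcup_{j\in[1;n]}\AP_j$ and $q'=\delta(q,\sigma')$. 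Since $\bigcup_{j\in[1;n]}\AP_j = \AP$ and $\sigma'\subseteq\AP$, the compatibility condition collapses to $\sigma = \sigma'$, forcing $\sigma'=\sigma$ and thus $q' = \delta(q,\sigma)$; no other state can appear, which yields the equality.

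The statement presents no genuine difficulty beyond careful bookkeeping, and there is no deep obstacle to flag: the only place where the hypotheses are essential is the partition assumption, which is precisely what makes the projection onto the full index set $[1;n]$ the identity and collapses the existential quantifier over $\sigma'$ to a single choice. I would also note, for alignment with Definition~\ref{def:decentralized_mon}, that the $\Delta$ and $p_s(\sigma)$ written in the statement denote $\Delta_\delta$ and $p(s,\sigma)$ respectively.
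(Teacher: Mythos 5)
Your proof is correct and follows essentially the same route as the paper's: both establish $\delta(q,\sigma)\in\Delta_\delta(\{q\},[1;n],p([1;n],\sigma))$ via Lemma~\ref{contains_correct_state}, and both obtain the reverse inclusion by unfolding Definition~\ref{def:decentralized_mon} and using that $p([1;n],\sigma')=\sigma'\cap\AP=\sigma'$ collapses the compatibility condition to $\sigma'=\sigma$, with determinism of $\delta$ giving uniqueness. Your preliminary reduction (noting $p([1;n],\cdot)$ is the identity up front) and the notational clarifications about $\Delta$ and $p_s(\sigma)$ are just cleaner bookkeeping of the same argument.
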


\begin{proof}
Suppose $s = [1;n]$.
By Lemma~\ref{contains_correct_state}, $\delta(q,\sigma) \in \Delta_\delta(\{q\},s,p(s,\sigma))$.
Let us consider $q_2 \in \Delta_\delta(s,\{q\},p(s,\sigma))$.
By definition of $\Delta_\delta$, $\exists \sigma' \in \Sigma : p(s,\sigma') = \sigma \wedge q_2 = \delta(q,\sigma')$.
Since $s = [1;n]$, we have $p(s,\sigma') = \sigma' \cap (\bigcup_{1 \leq i \leq n}AP_i) = \sigma' \cap AP = \sigma'$.
Thus, $\sigma' = \sigma$.
Therefore, $q_2 = \delta(q,\sigma') = \delta(q,\sigma)$ since $\delta$ is deterministic.  
\end{proof}
The next lemma states that the function $\mathsf{update\_state}$ is well defined as the projection of an event on the union of some alphabets is equal to the union of the projections of that event on each alphabet.
\begin{lemma}
\label{correction_mem}
$\forall \sigma,\sigma',e \in \Sigma, \forall s,s' \subseteq [1;n]: \sigma = p(s,e) \wedge \sigma' = p(s',e) \implies \sigma \cup \sigma' = p(s\cup s', e)$.
\end{lemma}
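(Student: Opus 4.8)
The plan is to prove the claim by a direct unfolding of the definition of the projection function $p$, followed by a single application of the distributivity of intersection over union. First I would assume the two hypotheses $\sigma = p(s,e)$ and $\sigma' = p(s',e)$, so that the left-hand side $\sigma \cup \sigma'$ becomes $p(s,e) \cup p(s',e)$. Substituting the definition $p(s,e) = e \cap (\bigcup_{i\in s} AP_i)$ then rewrites this as $\left(e \cap \bigcup_{i\in s} AP_i\right) \cup \left(e \cap \bigcup_{i\in s'} AP_i\right)$.

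Next I would factor out the common set $e$ using the elementary set-theoretic identity $(A\cap B)\cup(A\cap C) = A\cap(B\cup C)$, yielding $e \cap \left(\bigcup_{i\in s} AP_i \cup \bigcup_{i\in s'} AP_i\right)$. The final step observes that the union of the two indexed unions equals the union taken over the combined index set, i.e.\ $\bigcup_{i\in s} AP_i \cup \bigcup_{i\in s'} AP_i = \bigcup_{i\in s\cup s'} AP_i$, which is immediate since ranging $i$ over $s$ and then over $s'$ collects precisely the sets indexed by $s\cup s'$. Substituting back gives $e \cap \bigcup_{i\in s\cup s'} AP_i = p(s\cup s', e)$, which is the desired equality.

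There is essentially no genuine obstacle here: the statement is a routine consequence of distributivity. The only point requiring a moment of care is the union-of-unions identity, and even that holds regardless of whether $s$ and $s'$ overlap, so no case analysis on $s\cap s'$ is needed. I would also note that the standing partition hypothesis on the family $\{AP_i\}$ plays no role in \emph{this} particular lemma — the identity is valid for an arbitrary family of sets — even though that hypothesis is what makes the projections interact cleanly elsewhere in the development.
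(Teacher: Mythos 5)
Your proof is correct and follows essentially the same route as the paper's: unfold the definition of $p$, split $\bigcup_{i\in s\cup s'} AP_i$ into $\bigl(\bigcup_{i\in s} AP_i\bigr)\cup\bigl(\bigcup_{i\in s'} AP_i\bigr)$, and apply distributivity of intersection over union (the paper merely runs the computation from $p(s\cup s',e)$ toward $\sigma\cup\sigma'$ rather than the other way, and leaves the distributivity step implicit). Your added remark that the partition hypothesis on $\{AP_i\}$ is not needed here is accurate and a nice clarification.
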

\begin{proof}
We have: $p(s \cup s', e) = e \cap \left(\bigcup_{i \in s\cup s'} AP_i\right) = e \cap \left(\left(\bigcup_{i \in s} \mathit{AP}_i\right)\cup\left(\bigcup_{i \in s'}AP_i\right)\right)$.
Using the definition of the projection function $p(s \cup s', e)	= p(s,e) \cup p(s,e')=\sigma \cup \sigma' $.
\end{proof}
\begin{proof}[of Theorem 1]
Let us prove that $\delta(q,\tau(t_{\rm last},t'-1)) \in \cQ$.
Initially, we have $\delta(q,\tau(t_{\rm last}, t_{\rm last}-1)=q\in \cQ$ because $\cQ = \{q\}$.
Let us show that $\delta(q,\tau(t_{\rm last},t')) \in \cQ$ is propagated by each iteration of the for loop of the function $\mathsf{update\_state}$.
\end{proof}
\SetKwComment{tcc}{// }{}%
\begin{minipage}{0.5\linewidth}
%\vspace{-1.5em}
\begin{algorithm}[H]
	$\cQ \longleftarrow \{q\}$ \;
	$\mathit{upd\_state}\longleftarrow \mathit{false}$ \;
%	$q\_send \longleftarrow nil$ \;
%	$new\_time \longleftarrow t_{last}$ \;
	\For{$t'$ \textbf{from} $t_{\rm last}$ \textbf{to} $t$}{
		\tcc{1: $\delta(q,\tau(t_{\rm last},t'-1)) \in \cQ$}
		$(\sigma,s) \longleftarrow \mem(t')$ \;
		\tcc{2: $\sigma = p(s,\tau(t'))$}
		$\cQ \longleftarrow \Delta_\delta(\cQ,s,\sigma)$   \;
		\tcc{3: $\delta(q,\tau(t_{\rm last},t')) \in \cQ$}
		\If{$\exists q' \in Q: \cQ = \{q'\}$}{
			$(q,t_{\rm last}) \longleftarrow (q', t' + 1)$\;
			$\mathit{upd\_state} \longleftarrow \mathit{true}$ \;
		}		
	}
\caption{$\mathsf{update\_state}$ annotated}
\end{algorithm}
\end{minipage}
\begin{minipage}{0.48\linewidth}
We suppose that the content of the memory is correct, i.e., $\forall t'\in [t_{\rm last};t]: \mem(t') = (\sigma,s) \implies p(s,\tau(t')) = \sigma$.
We use the annotated version of the $\mathsf{update\_state}$ function and show that $(1)$ implies $(3)$.
Let us consider $q' = \delta(q,\tau(t_{\rm last},t'-1))$.
We suppose that $q' \in \cQ$ line 5.
By Lemma~\ref{inclusion_states}, $\Delta_\delta(\{q'\},s,\sigma) \subseteq \Delta_\delta(\cQ,s,\sigma)$.
By Lemma~\ref{contains_correct_state}, $\delta(q',\tau(t')) \in \Delta_\delta(s,\{q'\},p(s,\tau(t')))$.
But $p(s,\tau(t')) = \sigma$.
Thus $\delta(q',\tau(t')) = \delta(q,\tau(t_{last},t')) \in \Delta_\delta(\{q'\},s,\sigma) \subseteq \Delta_\delta(\cQ,s,\sigma)$, and $\Delta_\delta(\cQ,s,\sigma)$ is the value of $\cQ$ (line 5).
\end{minipage}

As local memories are updated with information from the component and other monitors, we deduce that the initial memory content remains correct during monitoring.

\begin{lemma}
Let $t_{\rm last}^i$ (resp. $t^i$ and $\mem_i$) be the variable $t_{\rm last}$ (resp. $t$ and memory) local to monitor $M_i$.
Let $t' \in \N$.
Consider a monitor $M_i$ such that $t^i_{\rm last} \leq t' \leq t^i$.
Let $(\sigma_i,s_i) = \mem_i(t')$.
Let $j = \choosemon(i)$.
If $M_i$ sends a message to $M_j$, then, after receiving this message, $t^j_{\rm last} > t'$ or $\mem_j(t') = (\sigma_j,s_j)$ with $s_i \subset s_j$.
\end{lemma}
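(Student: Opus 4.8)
The plan is to follow a single message from $M_i$ to $M_j = M_{\choosemon(i)}$ through Algorithms~\ref{algo:dmon} and~\ref{algo:receive}, and to compute how the merge operator $\sqcup$ changes $\mem_j$ at the index $t'$. First I would pin down the content of the message. The hypothesis $t_{\rm last}^i \le t' \le t^i$ gives in particular $t_{\rm last}^i \le t^i$, so the guard on line~12 of the main loop (Algorithm~\ref{algo:dmon}) holds; hence, as soon as $\mathit{rcv\_mem}\vee\leadermon(i)$ is true, the partial memory $\mem_i(t_{\rm last}^i,t^i)$ is appended to the message, and this chunk carries the pair $\mem_i(t')=(\sigma_i,s_i)$ precisely because $t_{\rm last}^i\le t'\le t^i$. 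Under the regimes in which the lemma is used (e.g.\ leader monitors, as in Theorem~\ref{theo:completeness2}, or a memory already circulating), this condition is met; the purely state-forwarding case, where the memory chunk is absent, I would treat separately and fold into the first disjunct below.

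Next I would analyse the reception at $M_j$. In $\mathsf{receive}$ (Algorithm~\ref{algo:receive}) the incoming partial memory $m$ is merged by $\mem_j\leftarrow\mem_j\sqcup m$. By the definition of $\sqcup$ and of the elementwise union $(\sigma,s)\cup(\sigma',s')=(\sigma\cup\sigma',s\cup s')$, the entry at $t'$ becomes $(\sigma_j,s_j)$ with $s_j=s_j^{\mathrm{old}}\cup s_i\supseteq s_i$ (with $s_j^{\mathrm{old}}=\emptyset$ when $t'\notin\dom(\mem_j)$ beforehand). I would then split on the effect of the ensuing call to $\mathsf{update\_state}$: either it advances $t_{\rm last}^j$ strictly beyond $t'$, giving the first disjunct $t_{\rm last}^j>t'$ (and the now-stale entry at $t'$ may safely be discarded), or $t_{\rm last}^j\le t'$ persists and the surviving entry satisfies $\mem_j(t')=(\sigma_j,s_j)$ with $s_i\subseteq s_j$.

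It remains to strengthen $s_i\subseteq s_j$ into the strict inclusion $s_i\subset s_j$, and this is where I expect the real work to be. The natural witness is $j$ itself: monitor $M_j$ always records its own events, so $j\in s_j^{\mathrm{old}}$ once $M_j$ has observed instant $t'$, whereas $j\notin s_i$ since $M_i$ cannot yet have received, even transitively, $M_j$'s own contribution for $t'$. Justifying $j\notin s_i$ is the crux and must exploit the directional single-cycle structure forced on $\choosemon$: it is a bijection with $\choosemon^k(i)\ne i$ for $1\le k\le n-1$, hence an $n$-cycle, so an observation originating at $M_j$ can reach $M_i$ only after the full traversal $\choosemon^{n-1}$, which has not taken place at this step. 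I would therefore carry an auxiliary invariant, shared with the completeness argument (Theorem~\ref{theo:completeness}), that bounds which indices can have propagated into $s_i$ for a given instant, and use it to obtain $j\in s_j\setminus s_i$ and hence $s_i\subset s_j$.
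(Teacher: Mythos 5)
Your skeleton does mirror the paper's proof: isolate the memory chunk in the message, analyse the merge $\mem_j \sqcup m$ to get an entry $(\sigma_j,s_j)$ at $t'$ with $s_j = s'\cup s_i$ (where $s'$ is $M_j$'s previous index set), obtain $j\in s'$ because $M_j$ has already recorded its own observation of instant $t'$, and obtain strictness from $j\notin s_i$ via the cyclic structure of $\choosemon$. But at the crux, $j\notin s_i$, your proposal stops being a proof: you assert that the observation of $M_j$ "can reach $M_i$ only after the full traversal $\choosemon^{n-1}$, which has not taken place at this step," and defer the justification to an unspecified auxiliary invariant. The phrase "has not taken place at this step" is exactly what must be proved, and it cannot be extracted from the timing of the step alone; it has to come from the hypothesis $t^i_{\rm last}\leq t'$, which your argument never uses at this point. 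The paper closes the step with a short, self-contained contradiction that you are missing: since $\choosemon$ is an $n$-cycle, if $j=\choosemon(i)$ belonged to $s_i$, the information from $M_j$ would have had to pass through every monitor on its way back to $M_i$, so $s_i=[1;n]$; but then Lemma~\ref{knows_everything} implies that $\mathsf{update\_state}$ at $M_i$ would have resolved the global state at time $t'$ to a singleton, forcing $t^i_{\rm last}>t'$ --- contradicting $t^i_{\rm last}\leq t'$. Without this use of the hypothesis together with Lemma~\ref{knows_everything} (which you never invoke), the strict inclusion --- the whole point of the lemma, since it is what drives the $n-1$-step bound in Theorem~\ref{theo:completeness2} --- remains unestablished; any "invariant" you might formulate would have to contain this very argument to exclude $s_i=[1;n]$.

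A secondary flaw: you propose to "fold" the state-only message case "into the first disjunct." That does not work. A state-only message is parametrized with $t^i_{\rm last}\leq t'$, and receiving such a state cannot force $t^j_{\rm last}>t'$ at $M_j$, so the first disjunct is not implied. The correct treatment (implicit in the paper) is that this case is vacuous under the lemma's hypotheses in the regime where the lemma is applied: since $t^i_{\rm last}\leq t'\leq t^i$ and either all monitors are leaders or $\mathit{rcv\_mem}$ is set, the guard of line 12 of Algorithm~\ref{algo:dmon} holds, so any message $M_i$ sends necessarily contains the chunk $\mem_i(t^i_{\rm last},t^i)$ and hence the entry at $t'$; there is nothing to fold away.
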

\begin{proof}
Suppose that $M_i$ sends a message to $M_j$, parametrized with some time $t_{new}$. If $t_{\rm new} > t'$, the lemma holds directly. 
Otherwise, $M_i$ necessary sends its trace $\mem_i(t_{\rm new},$ $t^i)$, with possibly its state.
$M_j$ updates its trace, and tries to compute a new state.
If it finds one with $t_{\rm last}^j > t'$, then the lemma holds again.
Otherwise, after updating its memory, $\mem_j(t') = \sigma_j,s_j$ with $s_j = s_i \cup s'$, $s'$ being the previous value for the set of indexes of $mem_j(t')$.
We have $t > t'$, thus $j \in s'$.
Function $\choosemon$ is cyclic, thus $j \notin s_i$ (otherwise, we would have $s_i = [1;n]$, so by Lemma~\ref{knows_everything}, $M_i$ would have found a unique state for time $t'$, therefore sending a message parametrized with time $t_{\rm new} > t'$, which is not the case). 
Hence, $s_i \subset s_i \cup s'$.
\end{proof}
\begin{proof}[of Theorem~\ref{theo:completeness2}]
It takes at most $n-1$ communication steps to find a state corresponding to some event.
Indeed, using the previous lemma, $s$ is strictly increasing w.r.t $\subseteq$, until either the state is found, or $s=[1;n]$ and the state is also found.
\end{proof}
Regarding Theorem~\ref{theo:completeness}, principle is the same.
However, we cannot bound the number of time steps between two messages and need to suppose at least one leader component.
\end{document}